\documentclass[draftclsnofoot,onecolumn]{IEEEtran}
\usepackage{pifont}
\usepackage{graphicx}
\usepackage{fancyhdr}
\usepackage{amsmath,amssymb,amstext,amsfonts}
\usepackage{bm}
\usepackage{algorithm}
\usepackage{array}
\usepackage{booktabs}
\usepackage{cite}

\usepackage{subfigure}
\usepackage{balance}
\usepackage{algorithmicx}
\usepackage{algpseudocode}

\usepackage{braket}
\usepackage{tabulary}
\usepackage{diagbox}
\usepackage{amsthm}
\usepackage{color}

\makeatletter
\renewcommand{\maketag@@@}[1]{\hbox{\m@th\normalsize\normalfont#1}}%
\makeatother

\newtheorem{theorem}{Theorem}
\newtheorem{lemma}{Lemma}

\newtheorem{corollary}{Corollary}

\begin{document}

\title{Finite-Support Structure in i.i.d.-Constrained Capacity of Finite-Memory Poisson Channels}

\author{Renzhi Yuan, \IEEEmembership{Member, IEEE} and Mugen Peng, \IEEEmembership{Fellow, IEEE}
\thanks{The authors are with the State Key Laboratory of Networking and Switching Technology, Beijing University of Posts and Telecommunications, Beijing 100876, China, and also with Beijing Key Laboratory of Convergent Communications and Networking Technologies in LEO Satellite Systems, Beijing 100876, China. This work is supported by Beijing Natural Science Foundation under Grant No. 4262010.}%
\thanks{Corresponding Author: Mugen Peng (pmg@bupt.edu.cn)}
\thanks{A journal version of this work is undergoing peer review in IEEE Transactions on Information Theory.}
}

\maketitle

\begin{abstract}
Discrete-time Poisson channels with finite intersymbol interference provide a natural model for direct-detection optical links in which multipath memory and signal-dependent shot noise appear simultaneously. Under peak and average optical-intensity constraints, we study the independent and identically distributed (i.i.d.)-constrained capacity problem of such channels. We prove that every input distribution maximizing the stationary mutual information rate within the i.i.d. input class has finite support. The proof is carried out directly on the entropy rate of the continuous-state hidden Markov output process induced by the finite-memory channel. We first establish a filtering-forgetting estimate whose constants are uniform over all admissible i.i.d. input laws. We then derive the entropy-rate first variation, construct a holomorphic extension of the corresponding influence function, and combine the Karush-Kuhn-Tucker condition with a supralinear growth argument.
\end{abstract}

\begin{IEEEkeywords}
Capacity analysis, finite-memory, Poisson channel, intersymbol interference, hidden Markov model
\end{IEEEkeywords}

\section{Introduction}

\subsection{Background and Motivation}
In recent years, optical wireless communication has emerged as a core technology for next-generation networks, owing to its unlicensed spectrum and immunity to electromagnetic interference. Typical examples include ultraviolet non-line-of-sight scattering communications~\cite{wang2026integrated,yuan2019monte,wang2023inter,yuan2026turbulent} and highly sensitive single-photon detection-based underwater wireless optical communications~\cite{zeng2017survey}. In these links, the received photons may experience strong multipath scattering before arriving at the receiver. The resulting intersymbol interference (ISI) has to be considered together with the signal-dependent shot noise caused by direct photodetection.

A useful discrete-time abstraction for this situation is the finite-memory Poisson channel~\cite{wyner1988capacity,lapidoth2009discrete_poisson}. The input intensity is usually constrained by both the peak optical power and the average optical power because of eye-safety regulations, hardware limitations, and energy-consumption requirements. If arbitrary temporally correlated inputs are permitted, the capacity problem becomes a continuous-state hidden Markov model (HMM) optimization with an infinite-dimensional input process. In this work, we focus on a more implementable and analytically tractable signaling class, where the channel inputs are independent and identically distributed (i.i.d.). Our objective is to establish a structural property of the marginal input laws that attain the i.i.d.-constrained capacity of finite-memory Poisson channels.

\subsection{Related Works}
The information-theoretic study of Poisson channels was initiated mainly from continuous-time direct-detection models. Davis~\cite{davis1980poisson_type} analyzed the capacity and cutoff rate of Poisson-type channels under peak and average constraints, and Wyner~\cite{wyner1988capacity,wyner1988capacity_part2} derived the capacity and error exponent of the direct-detection photon channel. Frey~\cite{frey1991poisson_capacity,frey1992lp_norm} further considered random or time-varying noise intensities and $L_p$-norm input constraints. Practical waveform restrictions were incorporated in later works. For example, Bar-David and Kaplan~\cite{bardavid1984oppm} studied photon-limited overlapping pulse-position modulation under pulsewidth constraints. Shamai~\cite{shamai1990capacity} investigated pulse-amplitude-modulated direct-detection photon channels and showed that the optimizing symbol distribution can be finite-valued in several regimes. Binary inputs with intertransition constraints were studied in~\cite{shamai1991intertransition}, and spectrally constrained Poisson channels were considered in~\cite{shamai1993spectrally_constrained}. Multiuser and multi-aperture Poisson models were studied in~\cite{lapidoth1998poisson_mac} and~\cite{haas2002mimo_poisson}, respectively.

For discrete-time memoryless Poisson channels, many studies have focused on capacity bounds and the structure of capacity-achieving input distributions. The discreteness of optimal distributions for pulse-amplitude-modulated Poisson models was established in~\cite{shamai1990capacity}. Lapidoth and Moser~\cite{lapidoth2009discrete_poisson} derived asymptotic capacity results under peak and average constraints. Cheraghchi and Ribeiro~\cite{cheraghchi2018improved_isit,cheraghchi2019improved} improved capacity upper bounds and studied structural properties under average and peak constraints. Cao, Hranilovic, and Chen~\cite{cao2014dtp_part1,cao2014dtp_part2} developed general properties, numerical methods, and binary-input optimality conditions for discrete-time Poisson channels. More recently, Dytso, Barletta, and Shamai~\cite{dytso2021poisson_support,barletta2024poisson_support} further sharpened support-size results for amplitude-constrained Poisson noise channels. Discreteness phenomena for other amplitude-constrained channels were studied in the classical work of Smith~\cite{smith1971information} and in subsequent studies~\cite{tchamkerten2004discreteness,behboodi2017censored}. Optical intensity channels with Gaussian-noise approximations were also studied under power and bandwidth constraints~\cite{hranilovic2004capacity}, which provide useful comparisons to Poisson-noise models.

The introduction of memory leads to a different type of problem. For finite-state channels, Blahut-Arimoto-type recursions and their extensions provide computable tools for multiletter optimizations~\cite{blahut1972computation,arimoto1972algorithm,vontobel2008generalization}. In general channels with memory, capacity is often described by multiletter or directed-information expressions, and temporally correlated or feedback-dependent inputs may be essential~\cite{permuter2009deterministic_feedback,permuter2008trapdoor_feedback}. In the finite-memory Poisson channel considered here, the hidden state becomes continuous whenever the input law is not finite-alphabet, and the stationary mutual information rate depends on the entire observation past. A point-mass perturbation of the input law therefore affects a finite physical output window first and then propagates through the nonlinear predictive filter of the HMM output process~\cite{han2006analyticity,han2013continuous_hmm_analyticity,tadic2019continuous_hmm_analyticity}. Related finite-memory Poisson models have also appeared in diffusion-based molecular communications. Aminian et al.~\cite{aminian2015lti_poisson} introduced the linear time-invariant Poisson channel, gave finite-letter capacity characterizations, and derived upper and lower bounds. Etemadi et al.~\cite{etemadi2019compound_poisson} considered discrete-time compound Poisson models arising from stochastic biological noise sources, and Ratti et al.~\cite{ratti2021bounds} derived constrained-capacity bounds for diffusive Poisson molecular channels with memory. Strong data-processing inequalities~\cite{raginsky2016sdpi} and neural estimators of directed information~\cite{aharoni2020dine,tsur2023neural_di,aharoni2020rl_fsc,aharoni2024neural_polar} provide complementary ways to analyze or approximate information measures in channels with memory. However, these works do not give a variational structural theorem for the optimizing continuous input distribution of finite-memory Poisson channels.

Different from the unrestricted Shannon capacity of channels with memory, this paper studies the capacity under the i.i.d. input restriction. This restriction is deliberate. It matches common finite-alphabet modulation practice and isolates the structural question of the optimal marginal law. The main problem is whether the i.i.d.-capacity-achieving distribution can contain a continuous component, or whether finite-alphabet signaling is sufficient even in the presence of finite ISI.

\subsection{Our Contributions}
In this work, we prove that every input law achieving the i.i.d.-constrained capacity of the finite-memory Poisson channel has finite support. The main contributions are summarized as follows.
\begin{itemize}
 \item We formulate the variational optimality condition directly for the stationary mutual information rate of the finite-memory Poisson channel. Since the output process is a continuous-state HMM, the output term is an entropy rate rather than a single-letter entropy. The resulting point-mass influence function includes both the direct finite-window effect of one input symbol and its later influence through the predictive filter.
 \item We establish a filtering-forgetting estimate for the continuous-state HMM induced by every admissible i.i.d. input law. The estimate is uniform in the input distribution. Its proof uses the block structure of the finite-memory channel, the common reference measure $F^{\otimes N}$ for one full memory block, and a random-minorization coupling argument. This result gives the continuity and locally uniform convergence needed for the entropy-rate first variation.
 \item We prove that the entropy-rate influence function has a holomorphic extension to a right-half complex domain. We then show that this extension satisfies a supralinear growth bound on the positive real axis. Together with the Karush-Kuhn-Tucker (KKT) condition and the identity theorem for holomorphic functions, this rules out infinitely many support points of any i.i.d.-capacity-achieving distribution.
\end{itemize}

The rest of this paper is organized as follows. Section~\ref{sec:system_model} introduces the finite-memory Poisson channel model and the i.i.d.-capacity formulation. Section~\ref{sec:main_results} presents the finite-support theorem. Section~\ref{sec:proof_finite_support} proves the theorem by using filtering-forgetting, entropy-rate first variation, KKT optimality, and a supralinear growth contradiction. Section~\ref{sec:numerical_results} presents finite-grid numerical studies, and Section~\ref{sec:conclusion} concludes the paper.

\section{System Model and Capacity Formulation} \label{sec:system_model}

\subsection{Discrete-Time Poisson Channel with Finite Memory}\label{subsec:HMM}
We consider a discrete-time baseband equivalent model for a direct-detection optical channel with finite-order ISI~\cite{wang2023inter}. The input process is denoted by $\{X_k\}_{k\in\mathbb Z}$, where $X_k$ represents the expected emitted photon intensity in the $k$-th time slot. Throughout the paper, $\mathcal P(\mathcal A)$ denotes the set of Borel probability measures on a compact metric space $\mathcal A$, $\mathcal B(\mathcal A)$ denotes its Borel $\sigma$-field, $\operatorname{supp}(F)$ denotes the closed support of a probability law $F$, and all logarithms are natural. For integers $i\le j$, $X_i^j\triangleq (X_i,\ldots,X_j)$ and $Y_i^j\triangleq (Y_i,\ldots,Y_j)$. The input is subject to a peak optical-intensity constraint and an average optical-intensity constraint. For a marginal input law $F$ on $[0,A]$, define the admissible set
\begin{equation}
\label{eq:admissible_set_secII}
\begin{aligned}
 \mathcal P_A(P_{\rm avg})
 \triangleq
 \left\{
 F\in\mathcal P([0,A]):
 \int_{0}^{A}x\,dF(x)\le P_{\rm avg}
 \right\}.
\end{aligned}
\end{equation}
Equivalently, under the i.i.d. input law induced by $F$, the peak and average optical-intensity constraints can be written as
\begin{equation}
\label{eq:power_const}
\begin{cases}
 0\le X_k\le A, \\
 \mathbb E_F[X_k]\le P_{\rm avg} .
\end{cases}
\end{equation}

The multipath dispersion of the optical channel is characterized by a finite memory length $L$. The channel impulse response is defined by nonnegative tap coefficients $h_0,h_1,\ldots,h_L$, where $h_0>0$ denotes the main line-of-sight path; we write $\mathbf h\triangleq(h_0,\ldots,h_L)$. To decouple the theoretical analysis from the absolute spatial path loss, we assume
\begin{equation}
\label{eq:tap_normalization_secII}
\begin{aligned}
 \sum_{\ell=0}^{L}h_\ell=1 .
\end{aligned}
\end{equation}
At time $k$, the received optical intensity, including the constant background dark current $\lambda_0>0$, is
\begin{equation}
\label{eq:intensity_secII}
\begin{aligned}
 Z_k
 \triangleq
 \lambda_0+\sum_{\ell=0}^{L}h_\ell X_{k-\ell} .
\end{aligned}
\end{equation}
Equivalently, $Z_k=h_0X_k+W_k$, where $W_k=\lambda_0+\sum_{\ell=1}^{L}h_\ell X_{k-\ell}$ collects the causal ISI from the past $L$ symbols and the physical background noise. We shall use
\begin{equation}
\label{eq:BA_def}
\begin{aligned}
 B_A\triangleq \lambda_0+A
\end{aligned}
\end{equation}
for the uniform upper bound on all feasible Poisson means.

Conditioned on the input sequence, the photon-count observations $\{Y_k\}_{k\in\mathbb Z}$ are conditionally independent Poisson random variables. In particular,
\begin{equation}
\label{eq:poisson_channel_secII}
\begin{aligned}
 \Pr_F(Y_k=y\mid X_{k-L}^{k})
 &=\frac{e^{-Z_k}Z_k^y}{y!},
 \qquad y\in\mathbb N_0,
\end{aligned}
\end{equation}
where $\mathbb N_0\triangleq\{0,1,2,\ldots\}$. Because the state $(X_{k-L},\ldots,X_k)$ takes values in a continuum whenever $F$ is not finite-alphabet, the output process is naturally viewed as a continuous-state hidden Markov process.

\subsection{Capacity Formulation}
For $F\in\mathcal P_A(P_{\rm avg})$, let $\mathbb P_F$, $\mathbb E_F$, $H_F$, and $I_F$ denote probability, expectation, entropy, and mutual information evaluated under the stationary i.i.d. input process with marginal law $F$ and the channel law in \eqref{eq:poisson_channel_secII}. For a block length $n$, the stationary mutual information rate functional is
\begin{equation}
\label{eq:mi_rate_secII}
\begin{aligned}
 \mathcal I(F)
 &\triangleq
 \lim_{n\to\infty}\frac{1}{n} I_F(X_1^n;Y_1^n) = h_F(\mathbf Y)-h_F(\mathbf Y|\mathbf X),
\end{aligned}
\end{equation}
where the output entropy rate and the conditional entropy rate are
\begin{equation}
\label{eq:entropy_rates_secII}
\begin{cases}
 h_F(\mathbf Y)
 \triangleq H_F(Y_0|Y_{-\infty}^{-1}), \\
 h_F(\mathbf Y|\mathbf X)
 \triangleq \mathbb E_F[\phi(Z_0)] .
\end{cases}
\end{equation}
Here $Y_{-\infty}^{-1}\triangleq (\ldots,Y_{-2},Y_{-1})$. The second equality in \eqref{eq:mi_rate_secII} follows from the chain rule of entropy, stationarity, and the conditional independence of the Poisson observations given the input sequence~\cite{cover2006elements}. The single-letter function $\phi(z)$ denotes the Shannon entropy of a Poisson random variable with mean $z$, i.e.,
\begin{equation}
\label{eq:poisson_entropy_fn}
\begin{aligned}
 \phi(z)
 \triangleq
 z-z\log z+
 \sum_{m=0}^{\infty}\frac{e^{-z}z^m}{m!}\log(m!) .
\end{aligned}
\end{equation}

The i.i.d. capacity under the joint peak and average constraints is then defined as
\begin{equation}
\label{eq:capacity_def}
\begin{aligned}
 C_{\rm i.i.d.}^{\rm ISI}(A,P_{\rm avg})
 \triangleq
 \sup_{F\in\mathcal P_A(P_{\rm avg})}\mathcal I(F) .
\end{aligned}
\end{equation}
The main analytical difficulty in \eqref{eq:capacity_def} lies in the term $h_F(\mathbf Y)$, which contains the infinite observation history of a continuous-state hidden Markov output process. The next sections address this difficulty through a variational analysis of $\mathcal I(F)$.

\section{Main Theoretical Result} \label{sec:main_results}

The main result is stated as follows. Its proof is given in Section~\ref{sec:proof_finite_support}.

\begin{theorem}
\label{thm:finite_support}
Consider the finite-memory Poisson channel $Y_k|X_{k-L}^k\sim\operatorname{Poisson}\left(\lambda_0+\sum_{\ell=0}^{L}h_\ell X_{k-\ell}\right)$, where $\lambda_0>0$, $h_\ell\geq 0$, $h_0>0$, and $\sum_{\ell=0}^Lh_\ell=1$. Under the peak and average constraints $0\leq X_k\leq A$ and $\mathbb E[X_k]\leq P_{\rm avg}$, every input distribution $F^\star$ that maximizes the stationary mutual information rate over $\mathcal P_A(P_{\rm avg})$ has finite support on $[0,A]$.
\end{theorem}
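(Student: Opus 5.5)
The proof follows the Smith-type template: KKT condition, holomorphic extension, identity theorem, and a growth contradiction. The work lies in carrying it out for the entropy rate of a continuous-state HMM.

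\textbf{Step 1: existence and the KKT condition.} Since $[0,A]$ is compact, $\mathcal P_A(P_{\rm avg})$ is weakly compact and convex. I will show that $\mathcal I$ is weakly continuous and has Gateaux derivatives along $F_\theta=(1-\theta)F^\star+\theta G$. For the term $\mathbb E_F[\phi(Z_0)]$ this is immediate: it is a polynomial-type functional of $F$ of degree $L+1$, and its first variation is a sum of integrals of $\phi(\lambda_0+h_\ell x+\cdots)$.

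For $h_F(\mathbf Y)$ I will first prove a filtering-forgetting estimate, uniform in $F\in\mathcal P_A(P_{\rm avg})$. The filter started from two different initial states merges geometrically, at a rate that does not depend on $F$. The argument takes blocks of length $N=L+1$. Given the past, the Poisson likelihood of a block is bounded above and below by constants of the form $e^{-B_A}\lambda_0^{y}$ versus $B_A^{y}$. This gives a Doeblin minorization with respect to the common reference $F^{\otimes N}$, with a constant depending only on the observed counts. Averaging over the counts, which have uniformly bounded Poisson tails, yields random-minorization coupling.

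Using this estimate, $H_F(Y_0\mid Y_{-n}^{-1})$ converges to $h_F(\mathbf Y)$ uniformly in $F$. I can then differentiate term by term and obtain an influence function $i(x;F^\star)$ such that
\[
\frac{d}{d\theta}\mathcal I(F_\theta)\Big|_{\theta=0}=\int i(x;F^\star)\,d(G-F^\star)(x).
\]
With a Lagrange multiplier $\nu\ge0$ for the average constraint, the KKT conditions read
\[
i(x;F^\star)\le \mathcal I(F^\star)+\nu(x-P_{\rm avg})\quad\text{for all }x\in[0,A],
\]
with equality on $\operatorname{supp}(F^\star)$.

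\textbf{Step 2: holomorphic extension.} The influence function $i(x)$ is an expectation over the other inputs and the outputs of terms of two kinds. The first kind are $\phi$-terms in $\lambda_0+h_\ell x+\cdots$. The second kind are $-\log$ predictive probabilities multiplied by Poisson weights $e^{-z}z^y/y!$, where $z$ is affine in $x$. The predictive probabilities at later times depend on $x$ through the filter, because the injected symbol enters the likelihoods.

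I will replace $x$ by a complex $s$ in a domain $\mathcal D=\{\operatorname{Re}s>-\delta\}$, where $\delta<\lambda_0/\max_\ell h_\ell$, so that every mean $\lambda_0+h_\ell s+\cdots$ keeps positive real part. Each finite-horizon truncation is then holomorphic. I must control the logarithms of the complexified predictive probabilities, keeping them away from zero, and show that the series over the horizon converges locally uniformly on compact subsets. This is where the uniform forgetting estimate is reused: the effect of the injected symbol on the predictor at lag $n$ decays geometrically, even for complex $s$ near the real axis. By Weierstrass, the limit is holomorphic on a neighbourhood $\mathcal D$ of $[0,\infty)$. Making this complex perturbation bound rigorous is the main obstacle. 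I would first try a neighbourhood thin enough that the perturbed likelihood ratios stay within a small complex disc around their real values, so that a contraction in Hilbert's projective metric persists.

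\textbf{Step 3: contradiction.} Suppose $\operatorname{supp}(F^\star)$ is infinite. It then has an accumulation point in $[0,A]$. The function $i(s)-\mathcal I(F^\star)-\nu(s-P_{\rm avg})$ is holomorphic on the connected set $\mathcal D$ and vanishes on the support, so by the identity theorem it vanishes identically. In particular $i(x)=\mathcal I+\nu(x-P_{\rm avg})$ for all real $x>A$, which forces linear growth.

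On the other hand, for large $x$ the direct term behaves like $-\sum_\ell \phi(\lambda_0+h_\ell x)\approx -\tfrac12\sum_\ell\log x$. The output-entropy terms behave like $-\mathbb E[\log p(Y\mid\text{past})]$ with $Y\approx\operatorname{Poisson}(h_0x)$ far outside the range of the predictive law. That law's mean is at most $B_A$, and its tail is at most $B_A^y/y!$ times a constant. Hence
\[
-\log p(Y\mid\text{past})\ge y\log(y/(eB_A))-C,
\]
so the expectation grows at least like $h_0x\log x$. Thus $i(x)$ is supralinear, $i(x)/x\to\infty$, which contradicts the linear identity. Therefore $\operatorname{supp}(F^\star)$ is finite.
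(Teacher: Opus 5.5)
Your architecture is the paper's. It uses block minorization against $F^{\otimes N}$ with random-minorization coupling, uniform-in-$F$ continuity, KKT, the identity theorem, and an $x\log x$ lower bound from $P_F(Y_t=y\mid Y_{<t})\le e^{-\lambda_0}B_A^y/y!$. The gap is Step~2, which you correctly call the crux but leave open, and which is aimed at the wrong object.

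The first variation of entropy is $\frac{d}{d\epsilon}\bigl(-\sum P_\epsilon\log P_\epsilon\bigr)=-\sum\dot P\log P$, because $\sum\dot P=0$. Hence the logarithmic factor in the influence function is the baseline information density $-\log P_F(Y_t\mid Y_{<t})$. It is produced by the unperturbed filter, which never sees the injected symbol. The input $x$ enters only linearly, through the law of the outputs $Y_0,\ldots,Y_L$ whose means contain $x$. Your statement that the predictive probabilities depend on $x$ through the filter is therefore wrong. It creates an obstacle that the correct formula does not contain: logarithms of complexified predictive probabilities, their nonvanishing, and a complex projective-metric contraction.

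It is also inconsistent with your Step~3. If the predictor knew $X_0=x$, its law for $Y_0$ would have mean about $h_0x$. The tail bound $e^{-\lambda_0}B_A^y/y!$ would no longer apply, and the lag-$0$ term would grow only logarithmically. Your thin-neighbourhood fallback also fails as stated. For $s=x+i\eta$, the ratio $(c+hs)^y/(c+hx)^y$, with $c>0$ the part of the mean not involving $s$, has argument about $yh\eta/(c+hx)$ and modulus growing in $y$. So no neighbourhood of the real axis keeps all likelihood ratios in a small disc, and the unbounded counts have to be absorbed by factorial summability instead.

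The missing mechanism is that of Lemmas~\ref{lem:signed_initial_extension} and~\ref{lem:finite_block_derivative}. Complexify only the $L+1$ affected Poisson kernels. On compacts $K$ of $\{\operatorname{Re}z>-\lambda_0/h_+\}$ they satisfy $|e^{-\zeta}\zeta^y/y!|\le e^{-\delta_K}R_K^y/y!$. So each finite-history term is holomorphic, since for a fixed history the paired logarithm is a fixed real number. The perturbation is then a zero-mass complex signed measure $\Xi^{(z)}$ on the affected history, with $\sup_{z\in K}\lVert\Xi^{(z)}\rVert_{\rm TV}<\infty$.

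Next, prove a signed version of forgetting. For zero-mass $\nu$ propagated by the real HMM kernels, $\langle-\log P_F(Y_t\mid Y_{<t}),\nu K_{s+1:t}\rangle=\int(R_t(w)-R_t(w_0))\,d\nu(w)$, where $R_t(w)$ is the expected baseline density at time $t$ started from history $w$. Your real forgetting estimate gives $|R_t(w)-R_t(w_0)|\le C\beta^{t-s}$; apply this to the real and imaginary parts separately. This yields $\sup_{z\in K}|\Delta_t(z)|\le C_K\beta^{t-L}$. Weierstrass then gives locally uniform convergence and holomorphy on the whole half-plane, with no complex filter at all.

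The same estimate has two further uses you need. With $\lVert\Xi^{(x)}\rVert_{\rm TV}\le2$ for real $x>A$, it shows that the terms $t>L$ contribute only $O(1)$ in Step~3, which your growth argument omits. Its uniformity in $F$ is also what licenses the term-by-term differentiation in Step~1; uniform convergence of $H_F(Y_0\mid Y_{-n}^{-1})$ alone does not.
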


Theorem~\ref{thm:finite_support} gives a structural reduction of the i.i.d.-constrained optimization problem. Although the admissible set contains all probability laws on the continuous interval $[0,A]$, no maximizing law can have a continuous component. Hence any optimal i.i.d. signaling law is finite-alphabet, although the locations and probabilities of its mass points still depend on the channel parameters and the power constraints. The statement holds for every finite peak level $A$ and every positive dark current $\lambda_0$.

\section{Proof of Theorem~\ref{thm:finite_support}: Finite-Support Structure}
\label{sec:proof_finite_support}

We now prove Theorem~\ref{thm:finite_support}. The argument starts from the HMM representation induced by a fixed i.i.d. input law. We first prove a filtering-forgetting estimate for the predictive filters and extend it to finite signed perturbations. This estimate gives the continuity needed for the existence of an optimizer and also justifies the entropy-rate first variation. After obtaining the holomorphic extension of the influence function, we derive the KKT condition for the optimal marginal law. The proof is completed by showing that the holomorphic KKT defect cannot vanish identically, because its value on the positive real axis grows faster than any linear function.

\subsection{Uniform Filtering-Forgetting Estimate}

As noted in Section~\ref{subsec:HMM}, for a fixed input law $F$, the finite-memory Poisson channel can be represented as an HMM whose hidden state is the input window $Q_t=(X_{t-L},\ldots,X_t)$. We write $\mathsf Q\triangleq[0,A]^{L+1}$ for the hidden-state space, so $Q_t\in\mathsf Q$, and the observation is the photon count $Y_t$. Given an initial law $\mu$ for the hidden state, let $\mathsf P_F^\mu$ denote the probability law of the HMM when $Q_0\sim\mu$. The predictive filter at time $t$ is the conditional distribution of $Q_t$ given the past observations:
$ \Pi_t^\mu(E) \triangleq \mathsf P_F^\mu\bigl(Q_t\in E \mid Y_1,\ldots,Y_{t-1}\bigr)$ for $E\in\mathcal B(\mathsf Q)$. Thus $\Pi_t^\mu$ is the one-step-ahead belief state used to predict $Y_t$.

For $q=(q_{-L},\ldots,q_0)\in\mathsf Q$, define $\lambda(q)\triangleq \lambda_0+\sum_{\ell=0}^{L}h_\ell q_{-\ell}$. If $\pi$ is a probability measure on $\mathsf Q$, the one-step predictive probability mass function is
\begin{equation}
\label{eq:predictive_pmf_def}
\begin{aligned}
 p_\pi(y)
 &\triangleq
 \int_{\mathsf Q}
 \exp[-\lambda(q)]
 \frac{\lambda(q)^y}{y!}\,d\pi(q),
 \qquad y\in\mathbb N_0 .
\end{aligned}
\end{equation}
The infinite-history entropy rate $h_F(\mathbf Y)=H_F(Y_0|Y_{-\infty}^{-1})$ is determined by the stationary predictive filter generated from the remote observation past. We therefore need a stability estimate showing that two filters with different remote initializations produce asymptotically identical one-step predictive laws after a long common observation history.

The norm $\lVert\cdot\rVert_{\rm TV}$ denotes the total-variation norm for finite signed measures; the convention is recalled in Appendix~\ref{app:measure_tools}. The proof below writes the posterior kernels by using a block Bayes operator and then applies the random-minorization coupling principle in Appendix~\ref{app:random_minorization}. No atom assumption is imposed on $F$. The common block reference measure is $F^{\otimes N}$ itself, and the constants are obtained only from likelihood-ratio ranges and good-block probabilities.

\begin{lemma}
\label{lem:uniform_filter_forgetting}
Fix $A<\infty$, $\lambda_0>0$, and nonnegative taps $h_0,\ldots,h_L$ with $\sum_{\ell=0}^L h_\ell=1$. For any $F\in\mathcal P_A(P_{\rm avg})$, consider the hidden state $Q_t\triangleq (X_{t-L},\ldots,X_t)\in\mathsf Q$. Let $\Pi_t^\mu$ and $\Pi_t^\nu$ be two predictive filters at time $t$, driven by the same observation sequence $Y_1,\ldots,Y_{t-1}$ but initialized from arbitrary probability laws $\mu$ and $\nu$ on $\mathsf Q$. All expectations below may be taken under any observation law generated by the same HMM with input marginal $F$ and an arbitrary hidden-state initial law supported on $\mathsf Q$. Then there exist constants $C<\infty$ and $0<\beta<1$, depending only on $(\lambda_0,A,\mathbf h,L)$ and not on $F$, such that
\begin{equation}
\label{eq:uff_tv_forgetting}
\begin{aligned}
 \mathbb E
 \bigl[
 \lVert\Pi_t^\mu-\Pi_t^\nu\rVert_{\rm TV}
 \bigr]
 &\le
 C\beta^t\lVert\mu-\nu\rVert_{\rm TV} .
\end{aligned}
\end{equation}
Moreover, the associated one-step predictive log-probabilities satisfy
\begin{equation}
\label{eq:uff_log_forgetting}
\begin{aligned}
 \mathbb E
 \left[
 \left|\log p_{\Pi_t^\mu}(Y_t)-\log p_{\Pi_t^\nu}(Y_t)\right|
 \right]
 &\le
 C\beta^t\lVert\mu-\nu\rVert_{\rm TV} .
\end{aligned}
\end{equation}
Consequently,
\begin{equation}
\label{eq:uff_entropy_history}
\begin{aligned}
&\sup_{F\in\mathcal P_A(P_{\rm avg})}
\mathbb E_F
\bigl[
\bigl|
\log P_F(Y_0|Y_{-m}^{-1})
-
\log P_F(Y_0|Y_{-\infty}^{-1})
\bigr|
\bigr] \le C\beta^m .
\end{aligned}
\end{equation}
\end{lemma}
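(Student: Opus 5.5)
The plan is to exploit the block structure of the hidden chain: after $N\triangleq L+1$ steps the hidden state $Q_{t+N}=(X_{t+1},\ldots,X_{t+N})$ is independent of $Q_t$ and distributed as $F^{\otimes N}$. I will write the filter recursion over one block as a random Bayes operator $\Pi_{t+N}=\mathcal K_{Y_t^{t+N-1}}(\Pi_t)$. Conditioned on the observations, the hidden chain is then a time-inhomogeneous Markov chain. Its backward/forward kernel over a block has a density with respect to the common reference measure $F^{\otimes N}$, given by a ratio of products of Poisson likelihoods $e^{-\lambda}\lambda^{y}/y!$.

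Since every Poisson mean lies in $[\lambda_0,B_A]$, the likelihood ratio between any two hidden configurations at a single observation $y$ is bounded by $e^{B_A-\lambda_0}(B_A/\lambda_0)^{y}$. Hence, on the event that the block counts are bounded, say $\max_{s} Y_s\le M$, the block kernel satisfies a Doeblin minorization
\[
K(q,\cdot)\ \ge\ \varepsilon_M\,\nu_{Y}(\cdot),\qquad \varepsilon_M\triangleq\bigl[e^{B_A-\lambda_0}(B_A/\lambda_0)^{M}\bigr]^{-2N},
\]
with a data-dependent probability measure $\nu_Y$ that is absolutely continuous with respect to $F^{\otimes N}$. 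The constant depends only on $(\lambda_0,A,\mathbf h,L,M)$ and not on $F$. This is exactly where the choice of $F^{\otimes N}$ as reference removes any atom assumption.

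The good-block probability is also uniform. Under any initialization, each $Y_s$ is Poisson with mean at most $B_A$, so $\Pr(Y_s>M)\le B_A/M$. Fixing $M$ then makes the probability that a block is good at least $1-NB_A/M\ge 1/2$, uniformly in $F$ and in the initial law. I then apply the random-minorization coupling principle of Appendix~\ref{app:random_minorization}. Each good block contracts total variation by a factor $(1-\varepsilon_M)$, and bad blocks do not expand it. Because the observations in disjoint blocks are not independent, I would control the number of good blocks through a conditional Chernoff bound: conditional on the past, each block is good with probability at least $1/2$. This gives $\mathbb E\|\Pi^\mu_t-\Pi^\nu_t\|_{\rm TV}\le C\beta^{t}\|\mu-\nu\|_{\rm TV}$ with $\beta$ depending only on $\varepsilon_M$ and $N$, which is \eqref{eq:uff_tv_forgetting}. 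The factor $\|\mu-\nu\|_{\rm TV}$ is retained by the standard linearity and Jordan decomposition of the Bayes operator for mixtures. I expect this step to be the main obstacle: the filters are nonlinear, the contraction is only random, and the constants must be kept uniform over $F$ throughout the coupling.

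For \eqref{eq:uff_log_forgetting}, I will use $|\log a-\log b|\le |a-b|/\min(a,b)$. Any predictive pmf satisfies $p_\pi(y)\ge e^{-B_A}\lambda_0^{y}/y!$. Moreover,
\[
|p_{\Pi^\mu}(y)-p_{\Pi^\nu}(y)|\le \|\Pi^\mu-\Pi^\nu\|_{\rm TV}\,\sup_q e^{-\lambda(q)}\lambda(q)^y/y!,
\]
so the ratio is at most $e^{B_A}(B_A/\lambda_0)^{y}\|\Pi^\mu_t-\Pi^\nu_t\|_{\rm TV}$. The factor $(B_A/\lambda_0)^{Y_t}$ is not bounded, so I will split on $\{Y_t\le M'\}$. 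On the complement I use the cruder bound $|\log p|\le C(1+Y_t\log^+ Y_t)$, and the Poisson tails are uniform in $F$. Alternatively, I can use H\"older with the exponential moment $\mathbb E[(B_A/\lambda_0)^{2Y_t}]\le e^{B_A((B_A/\lambda_0)^2-1)}$ together with the fact that $\|\cdot\|_{\rm TV}\le 2$, which upgrades the first moment to a square-root moment. Taking the square root of $\beta$ and enlarging $C$ then gives the bound. Finally, for \eqref{eq:uff_entropy_history}, stationarity lets me view $P_F(Y_0|Y_{-m}^{-1})$ and $P_F(Y_0|Y_{-\infty}^{-1})$ as predictive pmfs at time $m$. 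Both are started at time $-m$: the first from the stationary law $F^{\otimes(L+1)}$, and the second from the conditional law of $Q_{-m}$ given the remote past (a mixture, handled by conditioning and Jensen). Applying \eqref{eq:uff_log_forgetting} with $\|\mu-\nu\|_{\rm TV}\le2$ and taking the supremum over $F$ completes the proof.
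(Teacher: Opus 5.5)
Your architecture is the paper's: blocks of length $N=L+1$ with common reference $F^{\otimes N}$, overlap $\varepsilon_M=R_M^{-2}$ on good blocks, a good-block probability bounded below uniformly in $F$, random minorization, then a Lipschitz bound on $\log p_\pi$ and a shift of the time origin. Your Markov-inequality bound $1-NB_A/M$ and your good-block counting are fine substitutes for $c_M^N$ and the one-step recursion.

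The step that fails as written is the passage to \eqref{eq:uff_log_forgetting}. Put $D_t\triangleq\lVert\Pi_t^\mu-\Pi_t^\nu\rVert_{\rm TV}$. Your H\"older route gives $\mathbb E[(B_A/\lambda_0)^{Y_t}D_t]\le \mathbb E[(B_A/\lambda_0)^{2Y_t}]^{1/2}(2\,\mathbb E D_t)^{1/2}\le C\beta^{t/2}\lVert\mu-\nu\rVert_{\rm TV}^{1/2}$. No enlargement of $C$ turns $\lVert\mu-\nu\rVert_{\rm TV}^{1/2}$ into $\lVert\mu-\nu\rVert_{\rm TV}$ as $\lVert\mu-\nu\rVert_{\rm TV}\to0$. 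Your truncation route leaves a tail term $\mathbb E[(1+Y_t\log^+Y_t)\mathbf 1\{Y_t>M'\}]$ with no factor $\lVert\mu-\nu\rVert_{\rm TV}$, and with no decay in $t$ unless $M'$ grows with $t$.

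The missing observation, which the paper uses, is that $D_t$ is $\sigma(Y_1^{t-1})$-measurable. Under any HMM observation law, the conditional law of $Y_t$ given $Y_1^{t-1}$ is a Poisson mixture with means in $[\lambda_0,B_A]$, so $\mathbb E[(B_A/\lambda_0)^{Y_t}\mid Y_1^{t-1}]\le e^{B_A(B_A/\lambda_0-1)}$. Conditioning first therefore gives $\mathbb E[|\log p_{\Pi_t^\mu}(Y_t)-\log p_{\Pi_t^\nu}(Y_t)|\mid Y_1^{t-1}]\le C_0D_t$, and hence the linear bound with the same $\beta$. Your weaker version would still suffice for \eqref{eq:uff_entropy_history}, where only $\lVert\mu-\nu\rVert_{\rm TV}\le2$ is used, but not for \eqref{eq:uff_log_forgetting} as stated.

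The factor $\lVert\mu-\nu\rVert_{\rm TV}$ in \eqref{eq:uff_tv_forgetting} also needs more than ``linearity of the Bayes operator.'' The normalized block map is neither linear nor TV-nonexpansive. One block update is $\Phi_{y_1^N}(\pi)=\tilde\pi K_{y_1^N}$ with $\tilde\pi(dq)\propto Z_q(y_1^N)\pi(dq)$, and this reweighting can expand total variation by a factor up to $\sup_q Z_q(y_1^N)/\inf_q Z_q(y_1^N)\le R(y_1^N)$. So ``bad blocks do not expand it'' is true only for your forward smoothing kernels. That linearization is the right one: it gives pathwise $D_t\le(1-\varepsilon_M)^{N_M}\lVert\tilde\mu-\tilde\nu\rVert_{\rm TV}\le 2(1-\varepsilon_M)^{N_M}$, where $\tilde\mu,\tilde\nu$ are the \emph{smoothed} initial laws given $Y_1^{t-1}$, not $\mu,\nu$.

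To recover the linear factor:
\begin{itemize}
\item Write $\mu=(1-m)\rho+m\bar\mu^+$ and $\nu=(1-m)\rho+m\bar\nu^-$ with $m=\lVert\mu-\nu\rVert_{\rm TV}/2$.
\item Then $\Pi_t^\mu-\Pi_t^\nu=w_\mu(\Pi_t^{\bar\mu^+}-\Pi_t^\rho)-w_\nu(\Pi_t^{\bar\nu^-}-\Pi_t^\rho)$ with data-dependent weights $w_\mu=m\,p_{\bar\mu^+}(Y_1^{t-1})/p_\mu(Y_1^{t-1})$. These weights are not bounded by $m$ pathwise.
\item Under an arbitrary observation law $\mathsf P^\gamma$ one has $\mathbb E^\gamma[w_\mu X]=m\,\mathbb E^{\bar\mu^+}[(p_\gamma/p_\mu)X]$, and $p_\gamma/p_\mu\le R(Y_1^N)$ because only the first block sees the initial state.
\item Cauchy--Schwarz with the uniformly bounded $\mathbb E[R(Y_1^N)^2]$ and the geometrically decaying $\mathbb E[(1-\varepsilon_M)^{2N_M}]$ then gives $C\beta^t\lVert\mu-\nu\rVert_{\rm TV}$, uniformly in $F$.
\end{itemize}
The paper's own coupling step is no more careful here. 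It pushes a maximal coupling of the unweighted entrance filters through $K_{y_1^N}$ without accounting for the $Z_q(y_1^N)$ reweighting.
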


\begin{proof}
For $q\in\mathsf Q$, the Poisson mean satisfies
\begin{equation}
\label{eq:uff_lambda_bound}
\lambda_0\le \lambda(q)\le B_A .
\end{equation}
Let $g_y(q)\triangleq \exp[-\lambda(q)]\lambda(q)^y/y!$. For any $q,q'\in\mathsf Q$,
\begin{equation}
\label{eq:one_symbol_likelihood_ratio}
\frac{g_y(q)}{g_y(q')}
\le
\exp(B_A-\lambda_0)
\left(\frac{B_A}{\lambda_0}\right)^y .
\end{equation}
Group the hidden chain into blocks of length $N=L+1$. Starting from an initial hidden state $q$, let $u_1^N=(u_1,\ldots,u_N)$ denote the fresh input symbols drawn during the next $N$ updates, and let $Q_i(q,u_1^i)$ be the hidden state at time $i$ generated from $q$ and $u_1,\ldots,u_i$. After $N=L+1$ updates, the hidden state consists entirely of the fresh symbols $u_1^N$. Thus the terminal state of the block may be identified with $u_1^N$, and the common reference measure for the terminal state is $F^{\otimes N}$.

For a fixed output block $y_1^N$, define
\begin{equation}
\label{eq:block_likelihood}
G_{y_1^N}(q,u_1^N)
\triangleq
\prod_{i=1}^{N}
 g_{y_i}\bigl(Q_i(q,u_1^i)\bigr).
\end{equation}
The block positive kernel is
\begin{equation}
\label{eq:block_positive_kernel}
\mathcal L_{y_1^N}(q,du_1^N)
\triangleq
G_{y_1^N}(q,u_1^N)F^{\otimes N}(du_1^N).
\end{equation}
For two entrance states $q,q'$ and the same fresh-input vector $u_1^N$, \eqref{eq:one_symbol_likelihood_ratio} gives
\begin{equation}
\label{eq:block_likelihood_ratio}
\begin{aligned}
\frac{G_{y_1^N}(q,u_1^N)}
     {G_{y_1^N}(q',u_1^N)}
&\le
\exp\bigl(N(B_A-\lambda_0)\bigr)
\left(\frac{B_A}{\lambda_0}\right)^{\sum_{i=1}^N y_i} \triangleq R(y_1^N),
\end{aligned}
\end{equation}
and the same bound holds with $q$ and $q'$ interchanged.

For an entrance probability measure $\pi$ on $\mathsf Q$, define the normalized block Bayes operator
\begin{equation}
\label{eq:block_bayes_operator}
\begin{aligned}
\Phi_{y_1^N}(\pi)(E)
&\triangleq
\frac{\displaystyle
 \int_{\mathsf Q}\int_E
 G_{y_1^N}(q,u_1^N)
 F^{\otimes N}(du_1^N)d\pi(q)}
{\displaystyle
 \int_{\mathsf Q}\int
 G_{y_1^N}(q,v_1^N)
 F^{\otimes N}(dv_1^N)d\pi(q)} ,
\end{aligned}
\end{equation}
where $E\in\mathcal B([0,A]^N)$. If the block entrance state is exactly $q$, the same formula reduces to the posterior terminal kernel
\begin{equation}
\label{eq:state_posterior_kernel_repaired}
K_{y_1^N}(q,du_1^N)
\triangleq
\frac{G_{y_1^N}(q,u_1^N)F^{\otimes N}(du_1^N)}
{Z_q(y_1^N)},
\end{equation}
where
\begin{equation}
\label{eq:block_normalizer_repaired}
Z_q(y_1^N)
\triangleq
\int G_{y_1^N}(q,v_1^N)F^{\otimes N}(dv_1^N).
\end{equation}
From \eqref{eq:block_likelihood_ratio} and the same bound with $q$ and $q'$ interchanged, for every $u_1^N$,
\begin{equation}
\label{eq:block_likelihood_two_sided_repaired}
R(y_1^N)^{-1}G_{y_1^N}(q',u_1^N)
\le
G_{y_1^N}(q,u_1^N)
\le
R(y_1^N)G_{y_1^N}(q',u_1^N).
\end{equation}
After integration with respect to the same measure $F^{\otimes N}$,
\begin{equation}
\label{eq:block_normalizer_two_sided_repaired}
R(y_1^N)^{-1}Z_{q'}(y_1^N)
\le
Z_q(y_1^N)
\le
R(y_1^N)Z_{q'}(y_1^N).
\end{equation}
Thus the Radon-Nikodym densities of the two posterior kernels with respect to $F^{\otimes N}$ satisfy
\begin{equation}
\label{eq:posterior_kernel_density_minorization_repaired}
\begin{aligned}
\frac{dK_{y_1^N}(q,\cdot)}{dF^{\otimes N}}(u_1^N)
&=\frac{G_{y_1^N}(q,u_1^N)}{Z_q(y_1^N)} \ge
R(y_1^N)^{-2}
\frac{G_{y_1^N}(q',u_1^N)}{Z_{q'}(y_1^N)} .
\end{aligned}
\end{equation}
Consequently, for every measurable $E$,
\begin{equation}
\label{eq:posterior_kernel_minorization_repaired}
K_{y_1^N}(q,E)
\ge
R(y_1^N)^{-2}K_{y_1^N}(q',E).
\end{equation}
The minorization in \eqref{eq:posterior_kernel_minorization_repaired} relies only on ratios under the common measure $F^{\otimes N}$; no atom assumption on $F^{\otimes N}$ or domination by Lebesgue measure is involved.

Fix $M<\infty$ and define the good block event
\begin{equation}
\label{eq:good_block_event}
\mathcal G_M\triangleq\{y_1\le M,\ldots,y_N\le M\}.
\end{equation}
On $\mathcal G_M$,
\begin{equation}
\label{eq:good_block_RM}
R(y_1^N)\le R_M
\triangleq
\exp\bigl(N(B_A-\lambda_0)\bigr)
\left(\frac{B_A}{\lambda_0}\right)^{NM}.
\end{equation}
Set $\alpha_M\triangleq R_M^{-2}$. Then \eqref{eq:posterior_kernel_minorization_repaired} gives, for every good block $y_1^N$ and every pair $q,q'$,
\begin{equation}
\label{eq:good_block_kernel_splitting_repaired}
K_{y_1^N}(q,\cdot)
=
\alpha_M K_{y_1^N}(q',\cdot)
+(1-\alpha_M)H_{y_1^N}^{q,q'}(\cdot),
\end{equation}
where $H_{y_1^N}^{q,q'}$ is a probability measure. Therefore, conditional on a good block, there is a coupling of terminal fresh blocks with marginals $K_{y_1^N}(q,\cdot)$ and $K_{y_1^N}(q',\cdot)$ that makes the two terminal blocks equal with probability at least $\alpha_M$.

The good-block probability is uniformly positive. For $\lambda\in[\lambda_0,B_A]$, define
\begin{equation}
\label{eq:good_block_cm}
c_M\triangleq
\inf_{\lambda\in[\lambda_0,B_A]}
\sum_{y=0}^{M}e^{-\lambda}\frac{\lambda^y}{y!}>0.
\end{equation}
The strict positivity of $c_M$ follows because the function
$\sum_{y=0}^{M}e^{-\lambda}\lambda^y/y!$ is continuous
and strictly positive on the compact interval $[\lambda_0,B_A]$. Conditional on $Q_0=q$ and on any realization of $u_1^N$, all block-output means lie in $[\lambda_0,B_A]$, hence
\begin{equation}
\label{eq:good_block_cond_lower}
\mathsf P(\mathcal G_M|Q_0=q,u_1^N)\ge c_M^N.
\end{equation}
Averaging over $F^{\otimes N}$ gives
\begin{equation}
\label{eq:good_block_probability}
p_M\triangleq\inf_{q\in\mathsf Q}
\mathsf P(\mathcal G_M|Q_0=q)
\ge c_M^N>0.
\end{equation}
Thus, for every predictive entrance filter $\pi$,
\begin{equation}
\label{eq:good_block_filter_probability}
\mathsf P_\pi(\mathcal G_M)
=
\int_{\mathsf Q}\mathsf P(\mathcal G_M|Q_0=q)d\pi(q)
\ge p_M .
\end{equation}
We now apply the conditional coupling construction in Appendix~\ref{app:random_minorization}. Condition on the observations up to the beginning of block $j$, and let $\Gamma_j$ be a coupling of $\Pi_{jN}^\mu$ and $\Pi_{jN}^\nu$ that realizes their total-variation distance:
\begin{equation}
\label{eq:entrance_coupling_repaired}
\mathsf P(\widetilde Q_{jN}\ne\widehat Q_{jN}
\mid Y_1^{jN-1})
=
\frac12\lVert\Pi_{jN}^\mu-\Pi_{jN}^\nu\rVert_{\rm TV}.
\end{equation}
If the coupled entrance states are equal, the corresponding posterior terminal kernels are identical. If they are unequal and the observed block is good, \eqref{eq:good_block_kernel_splitting_repaired} couples the terminal states with probability at least $\alpha_M$. Hence
\begin{equation}
\label{eq:block_tv_recursion_repaired}
\mathbb E\left[\lVert\Pi_{(j+1)N}^\mu-\Pi_{(j+1)N}^\nu\rVert_{\rm TV}\right]\le(1-p_M\alpha_M)\mathbb E\left[\lVert\Pi_{jN}^\mu-\Pi_{jN}^\nu\rVert_{\rm TV}\right].
\end{equation}
Indeed, the event $\mathcal G_M$ has conditional probability at least $p_M$ for every current predictive filter by \eqref{eq:good_block_filter_probability}, and the conditional failure probability is reduced by at least $\alpha_M$ on that event. Iterating gives
\begin{equation}
\label{eq:complete_block_filter_forgetting}
\begin{aligned}
\mathbb E
\bigl[
\lVert\Pi_{jN}^\mu-\Pi_{jN}^\nu\rVert_{\rm TV}
\bigr]\le
\beta_0^j\lVert\mu-\nu\rVert_{\rm TV},
\end{aligned}
\end{equation}
where $\beta_0\triangleq1-p_MR_M^{-2}<1$. All constants here depend only on $(\lambda_0,A,\mathbf h,L,M)$ and not on $F$.

It remains to pass from complete blocks to arbitrary times. A terminal fragment contains at most $N-1$ prediction-and-update operations. These residual operations map probability measures to probability measures, so the total-variation distance between the two resulting filters is always bounded by $2$. Since the number of possible fragment lengths is finite and depends only on $N=L+1$, the fragment can only change the complete-block estimate by a finite multiplicative prefactor depending on $L$, not on $F$. Writing the estimate in the original time index gives \eqref{eq:uff_tv_forgetting} with constants depending only on $(\lambda_0,A,\mathbf h,L)$.

We next pass from filter distance to log-predictive distance. From \eqref{eq:uff_lambda_bound}, for every hidden state $q$,
\begin{equation}
\label{eq:emission_bounds_uniform}
 e^{-B_A}\frac{\lambda_0^y}{y!}
 \le
 g_y(q)
 \le
 e^{-\lambda_0}\frac{B_A^y}{y!} .
\end{equation}
Integrating with respect to any probability filter $\pi$ gives
\begin{equation}
\label{eq:uff_predictive_bounds}
 e^{-B_A}\frac{\lambda_0^y}{y!}
 \le
 p_\pi(y)
 \le
 e^{-\lambda_0}\frac{B_A^y}{y!} .
\end{equation}
Moreover, for two probability filters $\pi$ and $\pi'$,
\begin{equation}
\label{eq:uff_predictive_tv_lip}
\begin{aligned}
 |p_\pi(y)-p_{\pi'}(y)|
 &\le
 e^{-\lambda_0}\frac{B_A^y}{y!}
 \lVert\pi-\pi'\rVert_{\rm TV} .
\end{aligned}
\end{equation}
Since $|\log a-\log b|\le |a-b|/\min\{a,b\}$ for $a,b>0$, \eqref{eq:uff_predictive_bounds} and \eqref{eq:uff_predictive_tv_lip} imply
\begin{equation}
\label{eq:uff_log_lip}
\begin{aligned}
 |\log p_\pi(y)-\log p_{\pi'}(y)|
 &\le
 e^{B_A-\lambda_0}
 \left(\frac{B_A}{\lambda_0}\right)^y
 \lVert\pi-\pi'\rVert_{\rm TV} .
\end{aligned}
\end{equation}
Conditionally on the past observations, $Y_t$ has a predictive distribution $p_\pi$ for some probability filter $\pi$. Hence
\begin{equation}
\label{eq:uniform_exp_moment_predictive}
\begin{aligned}
 \sup_{\pi}
 \sum_{y=0}^{\infty}
 \left(\frac{B_A}{\lambda_0}\right)^y p_\pi(y)
 &\le
 e^{-\lambda_0}
 \sum_{y=0}^{\infty}
 \frac{(B_A^2/\lambda_0)^y}{y!}<\infty .
\end{aligned}
\end{equation}
Using \eqref{eq:uff_log_lip} and \eqref{eq:uniform_exp_moment_predictive} conditionally on $Y_1^{t-1}$ gives
\begin{equation}
\label{eq:conditional_log_lip_filter}
\begin{aligned}
&\mathbb E\left[
\left|
\log p_{\Pi_t^\mu}(Y_t)-\log p_{\Pi_t^\nu}(Y_t)
\right|
\,\middle|\,Y_1^{t-1}
\right] \le
C_0\lVert\Pi_t^\mu-\Pi_t^\nu\rVert_{\rm TV},
\end{aligned}
\end{equation}
where $C_0<\infty$ is independent of $F$. The constant in \eqref{eq:conditional_log_lip_filter} depends only on $\lambda_0$ and $B_A$, because the exponential moment in \eqref{eq:uniform_exp_moment_predictive} is taken uniformly over all predictive filters $\pi$ and all admissible input laws $F$. Taking expectation and using \eqref{eq:uff_tv_forgetting} proves \eqref{eq:uff_log_forgetting}. Finally, shifting the time origin to $-m$, with one filter initialized from the finite past $Y_{-m}^{-1}$ and the other from the stationary remote past $Y_{-\infty}^{-1}$, gives \eqref{eq:uff_entropy_history}.

\end{proof}

\begin{corollary}
\label{cor:pathwise_block_contraction}
Fix $F$, $M<\infty$, $N=L+1$, and let $\alpha_M\triangleq R_M^{-2}$ be the good-block overlap constant in \eqref{eq:good_block_RM}. For a deterministic block observation sequence $y_1^{jN}$, let
\begin{equation}
\label{eq:pathwise_good_block_count}
\begin{aligned}
N_M(y_1^{jN})
\triangleq
\sum_{r=0}^{j-1}
\mathbf 1\{(y_{rN+1},\ldots,y_{(r+1)N})\in\mathcal G_M\}
\end{aligned}
\end{equation}
be the number of good blocks in this sequence. If $\Phi_{y_1^{jN}}$ denotes the composition of the $j$ block Bayes operators in \eqref{eq:block_bayes_operator}, then for any two entrance probability measures $\pi$ and $\pi'$ on $\mathsf Q$, we have
\begin{equation}
\label{eq:pathwise_block_contraction}
\begin{aligned}
\bigl\|\Phi_{y_1^{jN}}(\pi)-\Phi_{y_1^{jN}}(\pi')\bigr\|_{\rm TV}
\le
(1-\alpha_M)^{N_M(y_1^{jN})}
\lVert\pi-\pi'\rVert_{\rm TV} .
\end{aligned}
\end{equation}
Consequently, finite-history and infinite-history predictive filters contract along each realized observation path according to the number of good blocks in the observed history. Averaging this pathwise estimate and using the uniform lower bound \eqref{eq:good_block_filter_probability} gives the exponential estimates in Lemma~\ref{lem:uniform_filter_forgetting}.
\end{corollary}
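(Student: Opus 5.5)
The plan is to prove \eqref{eq:pathwise_block_contraction} one block at a time and then chain the blocks. Write $\Phi_{y_1^{jN}}=\Phi_{b_{j-1}}\circ\cdots\circ\Phi_{b_0}$, where $b_r=(y_{rN+1},\ldots,y_{(r+1)N})$. Suppose we can show two facts:
\begin{equation*}
\lVert\Phi_{b}(\pi)-\Phi_{b}(\pi')\rVert_{\rm TV}\le \lVert\pi-\pi'\rVert_{\rm TV}\quad\text{for every block } b,
\end{equation*}
and
\begin{equation*}
\lVert\Phi_{b}(\pi)-\Phi_{b}(\pi')\rVert_{\rm TV}\le (1-\alpha_M)\lVert\pi-\pi'\rVert_{\rm TV}\quad\text{for } b\in\mathcal G_M .
\end{equation*}
Then induction on $j$ gives the factor $(1-\alpha_M)^{N_M(y_1^{jN})}$ directly. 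The "consequently" part then follows by taking expectations. By \eqref{eq:good_block_filter_probability}, each block is good with conditional probability at least $p_M$ whatever the current filter is. Hence $\mathbb E[(1-\alpha_M)^{N_M}]\le(1-p_M\alpha_M)^j$, by conditioning block by block. This reproduces \eqref{eq:complete_block_filter_forgetting}.

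For one block, I would write the normalized operator \eqref{eq:block_bayes_operator} as a linear kernel acting on a reweighted entrance law:
\begin{equation*}
\Phi_b(\pi)=\int_{\mathsf Q}K_b(q,\cdot)\,w_\pi(dq),\qquad w_\pi(dq)\triangleq\frac{Z_q(b)\,\pi(dq)}{\int Z_{q'}(b)\,\pi(dq')} .
\end{equation*}
On a good block, \eqref{eq:posterior_kernel_minorization_repaired} together with $R(b)\le R_M$ gives a Doeblin minorization: $K_b(q,\cdot)\ge\alpha_M K_b(q_\ast,\cdot)$ for a fixed reference state $q_\ast$ and every $q$. Hence the Dobrushin coefficient of $K_b$ is at most $1-\alpha_M$. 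This yields
\begin{equation*}
\lVert\Phi_b(\pi)-\Phi_b(\pi')\rVert_{\rm TV}\le(1-\alpha_M)\lVert w_\pi-w_{\pi'}\rVert_{\rm TV}.
\end{equation*}
The same argument also gives the uniform Doeblin splitting $\Phi_b(\pi)=\alpha_M\nu_b+(1-\alpha_M)\rho_\pi$, with a common $\nu_b=K_b(q_\ast,\cdot)$. This follows from $G_b(q,u)\ge R^{-1}G_b(q_\ast,u)$ and $Z_q\le RZ_{q_\ast}$, and it holds for every entrance law $\pi$.

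The main obstacle is the normalization. In general $\lVert w_\pi-w_{\pi'}\rVert_{\rm TV}$ is \emph{not} bounded by $\lVert\pi-\pi'\rVert_{\rm TV}$; it can be inflated by a factor up to about $R(b)$. The same effect threatens nonexpansiveness on bad blocks, where $R(b)$ is unbounded. My first attempt would avoid comparing normalized filters block by block. Instead I would use the linearity of the unnormalized composition $\pi\mapsto\int \pi(dq)\,\mathcal L_{y_1^{jN}}(q,\cdot)$, and pass to the coupling representation used in the proof of Lemma~\ref{lem:uniform_filter_forgetting}. Concretely, I would couple the entrance states optimally, so that they disagree with probability $\frac12\lVert\pi-\pi'\rVert_{\rm TV}$. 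Once the two copies agree, they stay together. On each good block, the splitting \eqref{eq:good_block_kernel_splitting_repaired} merges the copies with probability at least $\alpha_M$, independently of the past. Bad blocks are then harmless, because they only propagate disagreement and never create it.

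The delicate point is that the coupling has to live under the posterior given the \emph{entire} path $y_1^{jN}$, not just the past. The posterior hidden chain given the full path is an inhomogeneous Markov chain whose transitions are tilted by backward likelihood functions. I would need to check that this tilt is common to both copies once the copies sit at the same block terminal state. If that check fails, I would fall back on the weaker but sufficient form: the Doeblin splitting gives the pathwise bound with $\lVert\pi-\pi'\rVert_{\rm TV}$ replaced by $2$. That form still gives the exponential estimates of Lemma~\ref{lem:uniform_filter_forgetting}, up to the constant $C$.
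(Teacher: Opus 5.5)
\textbf{The gap.} You are right to worry about the normalization, but you place the failure in the wrong spot. In fact \eqref{eq:pathwise_block_contraction} is false as stated, so no argument can prove it.

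Write $b_r$ for the $r$-th block and $\beta_r(q)$ for the likelihood of $b_r,\ldots,b_{j-1}$ given state $q$ at the start of block $r$, with $\beta_j\equiv1$. Set $\widetilde K_r(q,dv)\triangleq G_{b_r}(q,v)\beta_{r+1}(v)F^{\otimes N}(dv)/\beta_r(q)$. Then $\Phi_{y_1^{jN}}(\pi)=\hat\pi\widetilde K_0\cdots\widetilde K_{j-1}$, where $\hat\pi(dq)\propto\beta_0(q)\pi(dq)$.

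The check you flagged actually succeeds. $\widetilde K_r$ depends only on the current state and the path, so merged copies stay merged. The ratio argument behind \eqref{eq:posterior_kernel_minorization_repaired} also gives $\widetilde K_r(q,\cdot)\ge R(b_r)^{-2}\widetilde K_r(q',\cdot)$.

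What breaks is the initialization. Under the full-path posterior, the entrance states have laws $\hat\pi,\hat\pi'$, not $\pi,\pi'$. Reweighting by $\beta_0$, whose oscillation ratio is bounded only by $R(y_1^N)$, can inflate total variation without limit. Coupling $\pi,\pi'$ maximally and running the untilted $K_{b_r}$ is exactly the paper's proof; the same untilted coupling appears in \eqref{eq:entrance_coupling_repaired}. It only couples $\pi K_{b_0}\cdots K_{b_{j-1}}$ with $\pi' K_{b_0}\cdots K_{b_{j-1}}$, and these are not the filters.

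\textbf{Counterexample.} Take $j=1$, $\pi=\delta_q$ and $\pi'=(1-\epsilon)\delta_q+\epsilon\delta_{q'}$. Then
\[
\Phi_y(\pi')-\Phi_y(\pi)=\frac{\epsilon Z_{q'}(y)}{(1-\epsilon)Z_q(y)+\epsilon Z_{q'}(y)}\bigl(K_y(q',\cdot)-K_y(q,\cdot)\bigr).
\]
Choose $L=1$, $\lambda_0=1$, $A=2$, $\mathbf h=(\tfrac12,\tfrac12)$, and $F=(1-p)\delta_0+p\delta_2$ with $p/(1-p)=e^2 2^{-10}$. Take $y=(10,0)$ and states with $q_0=0$ and $q_0'=2$.
\begin{itemize}
\item The posterior odds of $u_1=2$ against $u_1=0$ are $1$ under $K_y(q,\cdot)$ and $(3/4)^{10}$ under $K_y(q',\cdot)$. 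Hence $\lVert K_y(q',\cdot)-K_y(q,\cdot)\rVert_{\rm TV}\approx 0.89$.
\item The normalizer ratio is $Z_{q'}(y)/Z_q(y)=\tfrac12 e^{-1}2^{10}\bigl(1+(3/4)^{10}\bigr)\approx 199$.
\item For $\epsilon=10^{-3}$, the left side of \eqref{eq:pathwise_block_contraction} is about $0.15$. The right side is at most $2\epsilon=0.002$, for every $M$.
\end{itemize}
So both facts in your first plan fail, even on good blocks. The averaging step that yields \eqref{eq:complete_block_filter_forgetting} with the factor $\lVert\mu-\nu\rVert_{\rm TV}$ therefore has no valid input.

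\textbf{Your fallback.} It is the correct replacement, but derive it from the tilted representation above using Dobrushin coefficients:
\[
\lVert\Phi_{y_1^{jN}}(\pi)-\Phi_{y_1^{jN}}(\pi')\rVert_{\rm TV}\le(1-\alpha_M)^{N_M(y_1^{jN})}\lVert\hat\pi-\hat\pi'\rVert_{\rm TV}.
\]
It does not follow by iterating the one-block splitting $\Phi_b(\pi)=\alpha_M\nu_b+(1-\alpha_M)\rho_\pi$. The next normalized operator reweights $\nu_b$ and $\rho_\pi$ by their own normalizers, so the weight of the common part becomes $\pi$-dependent.

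Bounding $\lVert\hat\pi-\hat\pi'\rVert_{\rm TV}\le2$ and using your correct estimate $\mathbb E[(1-\alpha_M)^{N_M}]\le(1-p_M\alpha_M)^j$ gives \eqref{eq:uff_entropy_history}. It also gives the uniform bounds actually used in Lemmas~\ref{lem:signed_initial_extension} and~\ref{lem:existence}. It does not give the factor $\lVert\mu-\nu\rVert_{\rm TV}$ in \eqref{eq:uff_tv_forgetting}--\eqref{eq:uff_log_forgetting}.

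To recover that factor:
\begin{itemize}
\item use $\lVert\hat\pi-\hat\pi'\rVert_{\rm TV}\le 2R(y_1^N)\lVert\pi-\pi'\rVert_{\rm TV}$;
\item drop the first block from the good-block count;
\item use the uniform moment bound $\mathbb E[R(Y_1^N)]\le e^{N(B_A-\lambda_0)}e^{NB_A(B_A/\lambda_0-1)}$.
\end{itemize}
A correct pathwise statement must carry this data-dependent factor $R(y_1^N)$ in place of the bare $\lVert\pi-\pi'\rVert_{\rm TV}$.
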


\begin{proof}
Start with a maximal coupling of $\pi$ and $\pi'$, whose initial mismatch probability is $\lVert\pi-\pi'\rVert_{\rm TV}/2$ under the total-variation convention of Appendix~\ref{app:measure_tools}. For a bad block, use any coupling of the two posterior terminal kernels. For a good block, the splitting \eqref{eq:good_block_kernel_splitting_repaired} gives a coupling that makes the two terminal blocks equal with conditional probability at least $\alpha_M$ whenever the block-entrance states are unequal. Once the coupled terminal states are equal, the subsequent state windows can be kept equal by using the same fresh inputs. Hence, after the deterministic sequence $y_1^{jN}$,
\begin{equation}
\label{eq:pathwise_failure_probability}
\begin{aligned}
\mathsf P(\widetilde Q_{jN}\ne\widehat Q_{jN}\mid y_1^{jN})
\le
(1-\alpha_M)^{N_M(y_1^{jN})}
\frac{\lVert\pi-\pi'\rVert_{\rm TV}}{2} .
\end{aligned}
\end{equation}
Since the total-variation distance of two probability laws is bounded by twice the mismatch probability of any coupling, \eqref{eq:pathwise_block_contraction} follows. The final statement follows by taking expectation over the observation process and repeating the conditional good-block lower-bound argument used in \eqref{eq:block_tv_recursion_repaired}-\eqref{eq:complete_block_filter_forgetting}.
\end{proof}

\begin{lemma}
\label{lem:signed_initial_extension}
Let $s\ge0$ be fixed. Consider a finite signed or complex signed measure $\nu$ on a finite history space $\mathcal W_s$ consisting of a hidden state at time $s$ together with finitely many observations up to time $s$. Assume that $\nu(\mathcal W_s)=0$ and $\lVert\nu\rVert_{\rm TV}<\infty$. From time $s+1$ onward, let all hidden transitions and observation kernels be the ordinary real kernels induced by the input law $F$. Let $\nu K_{s+1:t}$ denote the resulting signed measure on the future variables up to time $t$. Then there exist constants $C<\infty$ and $0<\beta<1$, depending only on $(\lambda_0,A,\mathbf h,L)$ and not on $F$, $s$, or $\nu$, such that, for $t>s+L+1$,
\begin{equation}
\label{eq:signed_history_functional_bound}
\begin{aligned}
&\biggl|
\bigl\langle
-\log P_F(Y_t|Y_{<t}),\nu K_{s+1:t}
\bigr\rangle
\biggr|
\le
C\beta^{t-s}\lVert\nu\rVert_{\rm TV}.
\end{aligned}
\end{equation}
The same bound holds for the real and imaginary parts separately when $\nu$ is complex signed.
\end{lemma}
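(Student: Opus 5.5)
Our plan is to reduce the signed statement to the probabilistic forgetting estimate of Lemma~\ref{lem:uniform_filter_forgetting}. We write the Jordan decomposition $\nu=\nu^+-\nu^-$ in the real case. The condition $\nu(\mathcal W_s)=0$ forces $\nu^+(\mathcal W_s)=\nu^-(\mathcal W_s)=\tfrac12\lVert\nu\rVert_{\rm TV}\triangleq c$. If $c=0$ there is nothing to prove. Otherwise $\mu_1\triangleq\nu^+/c$ and $\mu_2\triangleq\nu^-/c$ are probability measures on $\mathcal W_s$. By linearity of $\nu\mapsto\nu K_{s+1:t}$, the quantity in \eqref{eq:signed_history_functional_bound} equals $c\bigl(\mathbb E^{\mu_1}[f_t]-\mathbb E^{\mu_2}[f_t]\bigr)$, where $f_t\triangleq-\log P_F(Y_t|Y_{<t})$ is a fixed function of the observation history. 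In the complex case we treat the real and imaginary parts separately; each has total mass zero and total variation at most $\lVert\nu\rVert_{\rm TV}$. So it suffices to bound $|\mathbb E^{\mu_1}[f_t]-\mathbb E^{\mu_2}[f_t]|\le C\beta^{t-s}$ for probability initializations.

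The key step is to compare both expectations with the predictive log-probability computed from the \emph{correct} initialization. Under $\mu_i$, we let $\Pi^{(i)}_t$ denote the predictive filter of $Q_t$ given the history coordinates in $\mathcal W_s$ together with $Y_{s+1}^{t-1}$, initialized from the conditional law under $\mu_i$. Let $\Pi^{\rm st}_t$ denote the filter defining $P_F(Y_t|Y_{<t})$; it uses the stationary prior on the hidden state at time $s$ given the observations up to $s$. We then decompose
\begin{equation*}
\mathbb E^{\mu_i}[f_t]=\mathbb E^{\mu_i}\bigl[-\log p_{\Pi^{(i)}_t}(Y_t)\bigr]+\mathbb E^{\mu_i}\bigl[\log p_{\Pi^{(i)}_t}(Y_t)-\log p_{\Pi^{\rm st}_t}(Y_t)\bigr].
\end{equation*}
The second term is bounded by $C\beta^{t-s}\cdot 2$ via \eqref{eq:uff_log_forgetting}. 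That estimate explicitly allows the expectation to be taken under the observation law generated from an arbitrary initial law, here $\mu_i$, and we apply it with time origin shifted to $s$.

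The first term is a conditional entropy of $Y_t$ given the past under the law started from $\mu_i$. To compare it across $i=1,2$, we will couple the two hidden chains from time $s$. We use a maximal coupling of the time-$s$ marginals, so the mismatch probability is at most $1$. After $L+1$ fresh inputs, which is why we require $t>s+L+1$, the hidden states coincide in law: both are $F^{\otimes N}$ fresh blocks. The remaining dependence on $\mu_i$ enters only through the observation history and its filter. We therefore rewrite the first term as $\mathbb E^{\mu_i}[-\log p_{\Pi_t}(Y_t)]$, with a common filter started from a fixed reference law. The resulting difference across $i$ is again controlled by \eqref{eq:uff_log_forgetting} and the exponential-moment bound \eqref{eq:uniform_exp_moment_predictive}. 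The pathwise contraction of Corollary~\ref{cor:pathwise_block_contraction}, combined with the uniform good-block probability \eqref{eq:good_block_filter_probability}, supplies the factor $\beta^{t-s}$ uniformly in $F$ and $s$.

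We expect the main obstacle to be this first term, that is, controlling how the \emph{observation law itself} (not just the filter) depends on the initialization. Here $f_t$ is unbounded, since $|\log p(y)|$ grows like $y\log y$. To handle this we would first try a total-variation bound: the law of $(Y_{s+1},\dots,Y_t)$ under $\mu_1$ versus $\mu_2$ differs only through the first $L+1$ steps. We would then use the uniform two-sided bounds \eqref{eq:uff_predictive_bounds} to dominate $|f_t|$ by $a+b\,Y_t\log(1+Y_t)$, which has uniformly bounded moments. Combined with the block-coupling contraction, this should give the decay $C\beta^{t-s}$ after adjusting constants, all depending only on $(\lambda_0,A,\mathbf h,L)$.
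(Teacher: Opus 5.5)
\paragraph{Where the argument breaks}
Your Jordan-decomposition reduction is the same as the paper's centering step. Your bound on the second term also holds once you condition on the observation coordinates of $w$: the two filters then start from the fixed laws $\mu_i(dq_s\mid y_{\le s})$ and the stationary conditional law, and both are driven by the one string $Y_{s+1}^{t-1}$.

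The gap is in the first term, which carries the content of the lemma. After you replace $\Pi^{(i)}_t$ by a common filter $\Pi_t$, you must bound $\mathbb E^{\mu_1}[g]-\mathbb E^{\mu_2}[g]$ for a single function $g=-\log p_{\Pi_t}(Y_t)$ under two \emph{different} observation laws. Estimate \eqref{eq:uff_log_forgetting} does not give this, because it compares two \emph{different} filters along one \emph{common} observation string. Your proposed link does not close the gap:
\begin{itemize}
\item A maximal coupling at time $s$ contributes nothing, since its mismatch bound ``at most $1$'' carries no information.
\item The remark that hidden states ``coincide in law'' after $L+1$ steps is not enough. The early counts $Y_{s+1}^{s+L}$ are correlated with the fresh block $Q_{s+L+1}$, so the joint laws of early and late observations still differ.
\item The fallback in your last paragraph fails outright. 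The total-variation distance between the laws of $Y_{s+1}^{t}$ under $\mu_1$ and $\mu_2$ is at least the distance between the laws of $Y_{s+1}^{s+L}$, which does not depend on $t$. Total variation times the envelope $a+bY_t\log(1+Y_t)$ therefore gives only $O(1)$, never $\beta^{t-s}$.
\end{itemize}

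\paragraph{The missing idea}
What is missing is a \emph{pathwise} synchronization, and this is exactly the paper's argument.
\begin{enumerate}
\item Drive both runs with the same fresh inputs $X_{s+1},X_{s+2},\dots$, and with the same emission randomness from time $s+L+1$ on. Then $Q_k$ and $Y_k$ coincide for every $k\ge s+L+1$.
\item The two log-predictive values at time $t$ now differ only because the predictive filters at time $s+L+1$ were computed from different early histories. From then on, both filters are driven by one common string.
\item Conditionally on the early histories, that string is generated by the same HMM started from the conditional law of the shared block $Q_{s+L+1}$.
\item Lemma~\ref{lem:uniform_filter_forgetting} then applies verbatim and gives $2C\beta^{t-s-L-1}$. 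This has the form $C'\beta^{t-s}$ with $C'$ depending only on $(\lambda_0,A,\mathbf h,L)$. The unbounded logarithm is already handled inside \eqref{eq:uff_log_forgetting}, so no separate envelope is needed.
\end{enumerate}

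With this coupling in hand, your intermediate filters $\Pi^{(i)}_t$ and the reference filter are unnecessary. You can compare the stationary filters issued from two histories $w$ and $w_0$ directly. Writing $R_t(w)$ for the conditional expectation of $-\log P_F(Y_t|Y_{<t})$ given $w$, this yields $|R_t(w)-R_t(w_0)|\le C'\beta^{t-s}$. Integrating against the zero-mass measure $\nu$ then gives the claim.
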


\begin{proof}
We give the proof for real signed measures; the complex case follows by applying the same argument to the real and imaginary parts. Let $\nu=\nu^+-\nu^-$ be the Jordan decomposition. Since $\nu(\mathcal W_s)=0$, both positive parts have the same mass $m=\nu^+(\mathcal W_s)=\nu^-(\mathcal W_s)=\lVert\nu\rVert_{\rm TV}/2$. If $m=0$, there is nothing to prove. Otherwise set $\bar\nu^+=\nu^+/m$ and $\bar\nu^-=\nu^-/m$.

For a finite history $w\in\mathcal W_s$, let $R_t(w)$ be the expectation of the baseline information density $-\log P_F(Y_t|Y_{<t})$ when the future after time $s$ is generated by the ordinary real HMM kernels starting from the hidden state and observation history encoded by $w$. Since $\nu$ has zero total mass, for any fixed reference history $w_0$,
\begin{equation}
\label{eq:signed_history_centering}
\begin{aligned}
&\bigl\langle
-\log P_F(Y_t|Y_{<t}),\nu K_{s+1:t}
\bigr\rangle=
\int_{\mathcal W_s}
\bigl(R_t(w)-R_t(w_0)\bigr)d\nu(w).
\end{aligned}
\end{equation}

It remains to bound $|R_t(w)-R_t(w_0)|$ uniformly over finite histories. Couple the two future input processes after time $s$ by using the same fresh inputs. After $L+1$ steps the two hidden-state windows can be made identical, because each window then consists entirely of fresh symbols. From that time onward the comparison reduces to two baseline predictive filters with different finite initial observation histories and a common subsequent observation string. Applying Lemma~\ref{lem:uniform_filter_forgetting}, specifically the log-predictive estimate \eqref{eq:uff_log_forgetting}, to these two filters gives
\begin{equation}
\label{eq:finite_history_Rt_difference}
\lvert R_t(w)-R_t(w_0)\rvert
\le
C\beta^{t-s}
\end{equation}
uniformly over all finite histories $w,w_0$. The finitely many times $t-s\le L+1$ are absorbed into the constant by the uniform Poisson logarithmic moment bound \eqref{eq:uniform_exp_moment_predictive}. Combining \eqref{eq:signed_history_centering} and \eqref{eq:finite_history_Rt_difference} yields \eqref{eq:signed_history_functional_bound}.
\end{proof}
\subsection{Existence of an Optimizer}

We next show that the supremum in the i.i.d.-capacity problem is attained.

\begin{lemma}
\label{lem:existence}
The optimization problem
$ \sup_{F\in\mathcal P_A(P_{\rm avg})}\mathcal I(F) $
admits at least one maximizer $F^\star\in\mathcal P_A(P_{\rm avg})$.
\end{lemma}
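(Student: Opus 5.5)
The plan is the Weierstrass route: $\mathcal P_A(P_{\rm avg})$ is weakly compact, and $\mathcal I$ is weakly continuous on it, so a maximizer exists.

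\emph{Compactness.} Since $[0,A]$ is a compact metric space, $\mathcal P([0,A])$ is weakly compact by Prokhorov's theorem. The map $F\mapsto\int x\,dF(x)$ is weakly continuous because $x$ is bounded and continuous on $[0,A]$. Hence the constraint set $\mathcal P_A(P_{\rm avg})$ is a closed subset of a compact set, and so it is compact. It is nonempty whenever $P_{\rm avg}\ge0$, since it contains $\delta_0$.

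\emph{Continuity of the conditional term.} I split $\mathcal I(F)=h_F(\mathbf Y)-h_F(\mathbf Y|\mathbf X)$. For the second term, $h_F(\mathbf Y|\mathbf X)=\int\phi(\lambda(q))\,dF^{\otimes(L+1)}(q)$, with $\phi$ continuous on $[\lambda_0,B_A]$. If $F_k\to F$ weakly, then $F_k^{\otimes(L+1)}\to F^{\otimes(L+1)}$ weakly, and this term converges.

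\emph{Continuity of the entropy rate.} For the first term I approximate by finite histories:
\[
h_F^{(m)}\triangleq H_F(Y_0|Y_{-m}^{-1})=H_F(Y_{-m}^0)-H_F(Y_{-m}^{-1}).
\]
By \eqref{eq:uff_entropy_history} in Lemma~\ref{lem:uniform_filter_forgetting}, $|h_F^{(m)}-h_F(\mathbf Y)|\le C\beta^m$ uniformly over $F\in\mathcal P_A(P_{\rm avg})$. It therefore suffices to show that each $h_F^{(m)}$ is weakly continuous in $F$; a uniform limit of continuous functions is then continuous.

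The joint pmf of $Y_{-m}^0$ is
\[
P_F(y)=\int\prod_k g_{y_k}(\lambda(\cdot))\,dF^{\otimes(m+L+1)}.
\]
For each fixed $y$ the integrand is a bounded continuous function of the inputs, so $P_F(y)$ is weakly continuous pointwise in $y$. To pass to the entropy sum $-\sum_y P_F(y)\log P_F(y)$, I need uniform domination. By \eqref{eq:emission_bounds_uniform},
\[
e^{-(m+1)B_A}\prod_k\frac{\lambda_0^{y_k}}{y_k!}\le P_F(y)\le \prod_k e^{-\lambda_0}\frac{B_A^{y_k}}{y_k!}.
\]
Hence $|P_F(y)\log P_F(y)|$ is bounded by an $F$-independent summable function of the form
\[
\prod_k\frac{B_A^{y_k}}{y_k!}\cdot\Bigl(c+\sum_k\bigl(y_k|\log\lambda_0|+\log y_k!\bigr)\Bigr),
\]
which is summable since it has Poisson-type decay. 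Dominated convergence for sums then gives continuity of $H_F(Y_{-m}^0)$, and the same argument gives continuity of $H_F(Y_{-m}^{-1})$.

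\emph{Conclusion.} $\mathcal I$ is a continuous real function on a nonempty compact set and therefore attains its supremum at some $F^\star$. The one delicate point is the interchange of limit and the infinite sum over $y$. The uniform two-sided bounds \eqref{eq:emission_bounds_uniform} handle this. The uniformity in $F$ of the forgetting constant is essential, because it turns pointwise convergence of the finite-history entropies into uniform convergence of the entropy rate.
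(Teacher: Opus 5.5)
Your proof is correct and follows the paper's route: weak compactness of $\mathcal P_A(P_{\rm avg})$, continuity of the conditional term $\mathbb E_F[\phi(Z_0)]$, and continuity of $h_F(\mathbf Y)$ as the $F$-uniform limit of the finite-history entropies $H_F(Y_0|Y_{-m}^{-1})$, each of which is continuous by Poisson-tail domination. The differences are cosmetic: you cite the uniform bound \eqref{eq:uff_entropy_history} of Lemma~\ref{lem:uniform_filter_forgetting} directly and write the dominating function out explicitly, whereas the paper re-derives the uniform bound by averaging the pathwise contraction of Corollary~\ref{cor:pathwise_block_contraction} together with \eqref{eq:uff_log_lip} and \eqref{eq:uniform_exp_moment_predictive}.
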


\begin{proof}
Since $[0,A]$ is compact, the space $\mathcal P([0,A])$ is weakly compact~\cite{billingsley1999convergence}. The average-power constraint is weakly closed because $x$ is a bounded continuous function on $[0,A]$. Hence $\mathcal P_A(P_{\rm avg})$ is weakly compact.

It remains to prove that $\mathcal I(F)$ is continuous on $\mathcal P_A(P_{\rm avg})$. Under the peak constraint and the tap normalization,
\begin{equation}
 \lambda_0 \le Z_k \le \lambda_0+A .
\end{equation}
Therefore, all finite-dimensional output distributions have uniformly dominated Poisson tails. For every finite block length $n$, the block mutual information $I_F(X_1^n;Y_1^n)$ is continuous in $F$ under weak convergence by dominated convergence.

It remains to pass from finite blocks to the entropy rate uniformly in $F$. For $m\ge1$, set
\begin{equation}
 h_m(F) \triangleq H_F(Y_0|Y_{-m}^{-1}).
\end{equation}
For each fixed $m$, $h_m(F)$ is continuous in $F$ by the same finite-dimensional domination argument. To compare $h_m(F)$ with the infinite-history entropy rate, apply the pathwise contraction in Corollary~\ref{cor:pathwise_block_contraction} to the filters initialized by the finite observation past $Y_{-m}^{-1}$ and by the remote past $Y_{-\infty}^{-1}$. Over the $\lfloor m/N\rfloor$ complete blocks contained in the past, the conditional probability of a good block is at least $p_M$ at every step, uniformly over the current predictive filter. Therefore, averaging the pathwise contraction over the stationary observation law yields the same geometric factor as in \eqref{eq:complete_block_filter_forgetting}; the remaining incomplete block contributes only a finite multiplicative constant. Combining this averaged contraction with the log-predictive Lipschitz bound \eqref{eq:uff_log_lip} and the uniform logarithmic moment bound \eqref{eq:uniform_exp_moment_predictive} gives constants $C<\infty$ and $0<\eta<1$, depending only on $(\lambda_0,A,\mathbf h,L)$, such that
\begin{equation}
 \sup_{F\in\mathcal P_A(P_{\rm avg})}
 |h_m(F)-h_F(\mathbf Y)|
 \le C\eta^m .
\end{equation}
The constants are uniform in $F$ because both the pathwise contraction and the good-block probability bound use only the interval bound $\lambda_0\le Z_k\le\lambda_0+A$, the finite memory length $L$, and the tap vector $\mathbf h$.

Hence $h_F(\mathbf Y)$ is the uniform limit of the continuous functions $h_m(F)$, and is continuous. The conditional entropy term $\mathbb E_F[\phi(Z_0)]$ is continuous because $\phi$ is continuous on $[\lambda_0,\lambda_0+A]$. Thus $\mathcal I(F)$ is continuous on the compact admissible set, and the supremum is attained.
\end{proof}

\subsection{First Variations and Entropy-Rate Influence Function}

For $x\in[0,A]$, the point-mass influence is a genuine directional derivative in the feasible input space. Fixing one symbol produces a direct change in a finite output window and a later filter-mediated effect. Its holomorphic extension will be denoted by a different symbol, $\Psi_F$, because values on the positive real axis outside $[0,A]$ are analytic-continuation values rather than feasible point-mass perturbations.

For a complex number $z\in\Omega$, the notation involving a fixed symbol $X_j=z$ is defined as follows. In any finite window affected by $X_j$, each Poisson kernel whose mean contains $z$ is replaced by its analytic expression $e^{-\zeta(z)}\zeta(z)^y/y!$. This produces a finite complex measure on the corresponding output block. The logarithmic factor against which this complex measure is paired is always the baseline information density generated by the unperturbed law $F$; the baseline predictive filter is not complexified. When $z\in[0,A]$, the same expression coincides with the ordinary probabilistic expectation under the experiment in which $X_j$ is fixed to $z$.

More explicitly, for a finite history length $m$, we define
\begin{equation}
\label{eq:finite_history_baseline_density}
\begin{aligned}
\ell_{F,m}(y_t|y_{t-m}^{t-1})
\triangleq
-\log P_F(Y_t=y_t|Y_{t-m}^{t-1}=y_{t-m}^{t-1}),
\end{aligned}
\end{equation}
with the convention that the conditioning is interpreted under the stationary baseline law $F$. For each $t,m$ and $z\in\Omega$, let $\Lambda_{t,m}^{(z)}$ be the finite-dimensional complex measure obtained by fixing $X_0=z$ in the affected Poisson kernels and keeping all other input symbols distributed according to $F$. The finite-history perturbation functional is
\begin{equation}
\label{eq:Delta_tm_definition}
\begin{aligned}
\Delta_{t,m}(z)\triangleq
\left\langle
\ell_{F,m}(Y_t|Y_{t-m}^{t-1}),
\Lambda_{t,m}^{(z)}-\Lambda_{t,m}^{(F)}
\right\rangle .
\end{aligned}
\end{equation}
Here $\Lambda_{t,m}^{(F)}$ denotes the corresponding baseline finite-dimensional law. The signed or complex signed measure in \eqref{eq:Delta_tm_definition} has total mass zero. For real $x\in[0,A]$, \eqref{eq:Delta_tm_definition} is the usual expectation difference produced by the point-mass perturbation $\delta_x-F$; for nonreal $z$, it is only a finite-dimensional pairing.

\begin{lemma}
\label{lem:finite_block_derivative}
Fix $F\in\mathcal P_A(P_{\rm avg})$. Let $h_+\triangleq \max_{0\le r\le L:h_r>0}h_r$ and $\Omega\triangleq
\{z\in\mathbb C:\operatorname{Re}z>-\lambda_0/h_+\}$. For every compact set $K\Subset\Omega$, the finite-block derivative functions
\begin{equation}
\label{eq:finite_block_derivative_function}
D_{n,F}(z)
\triangleq
\frac1n
\left.
\frac{d}{d\epsilon}H_{F_\epsilon^{(z)}}(Y_1^n)
\right|_{\epsilon=0}
\end{equation}
are well defined by finite-dimensional analytic continuation and converge locally uniformly on $K$ to
\begin{equation}
\label{eq:entropy_rate_derivative_series}
\dot h_{Y,F}^{\rm hol}(z)
\triangleq
\sum_{t=0}^{\infty}\Delta_t(z),
\end{equation}
where $\Delta_t(z)$ is the locally uniform limit of the finite-history analytic functions $\Delta_{t,m}(z)$ defined in \eqref{eq:Delta_tm_definition}. Equivalently, $\Delta_t(z)$ is the analytic continuation of the change in the baseline information density $\ell_F(Y_t|Y_{<t})=-\log P_F(Y_t|Y_{<t})$ caused by fixing $X_0=z$, with the baseline log-likelihood kept fixed and only the finite-dimensional measure complexified. Moreover,
\begin{equation}
\label{eq:delta_series_uniform_bound_new}
\sum_{t=0}^{\infty}\sup_{z\in K}|\Delta_t(z)|<\infty .
\end{equation}
For real $x\in[0,A]$, $\dot h_{Y,F}^{\rm hol}(x)$ equals the directional derivative of the output entropy rate in the direction $\delta_x-F$.
\end{lemma}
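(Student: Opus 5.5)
The plan is to compute the finite-block derivative explicitly, rewrite it as a telescoping sum of per-time perturbation functionals, and then use Lemma~\ref{lem:signed_initial_extension} to obtain geometric decay that is uniform in both $n$ and $z\in K$.

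\textbf{Finite-block derivative.} Let $F_\epsilon^{(z)}$ denote the perturbation in which the $n$ i.i.d. inputs use the law $(1-\epsilon)F+\epsilon\delta_z$, with the Poisson kernels complexified as described above. The output law $P_\epsilon(y_1^n)$ is then a polynomial in $\epsilon$, since it is multilinear in the $n$ input marginals. Its first-order coefficient is $\sum_{j=1}^n (\Lambda^{(z,j)}_n-\Lambda^{(F)}_n)(y_1^n)$, where $\Lambda^{(z,j)}_n$ is the law obtained by fixing $X_j=z$. Because the perturbation has zero total mass, the term $\sum_y \dot P$ vanishes when we differentiate $-\sum_y P_\epsilon\log P_\epsilon$. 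This leaves
\[
\frac{d}{d\epsilon}H(Y_1^n)\Big|_{0}=\sum_{j=1}^n\bigl\langle -\log P_F(Y_1^n),\Lambda^{(z,j)}_n-\Lambda^{(F)}_n\bigr\rangle.
\]
The exchange of $\frac{d}{d\epsilon}$ with the countable sum over $y_1^n$ is justified by dominated convergence. The key bound is that for $z\in K\Subset\Omega$ the complexified means $\zeta(z)=\lambda_0+h_rz+\cdots$ have real part bounded below by some $\delta_K>0$ and modulus bounded above by some $B_K$. Hence $|e^{-\zeta}\zeta^y/y!|\le e^{-\delta_K}B_K^y/y!$, while $|\log P_F(y_1^n)|$ grows at most linearly in $\sum y_i$ by \eqref{eq:uff_predictive_bounds}. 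Analyticity in $z$ of each finite pairing follows from the same domination together with Morera's theorem.

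\textbf{Reindexing as a per-time series.} Expand $-\log P_F(Y_1^n)=\sum_{t=1}^n \ell_F(Y_t\mid Y_1^{t-1})$ by the chain rule. Kernels before time $j$ are unaffected by the perturbation, so pairing $\ell_F(Y_t\mid Y_1^{t-1})$ with $\Lambda^{(z,j)}-\Lambda^{(F)}$ gives zero for $t<j$. By stationarity, the term with $t\ge j$ equals $\Delta_{t-j,\,t-1}(z)$ after shifting time so that $j=0$; the history length is $t-1$, truncated by the left boundary. Therefore
\[
D_{n,F}(z)=\frac1n\sum_{j=1}^n\sum_{s=0}^{n-j}\Delta_{s,j+s-1}(z).
\]

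\textbf{Uniform decay.} The main step is to show that $|\Delta_{s,m}(z)|\le C_K\beta^{s}$ uniformly in $m$ and $z\in K$, and that $|\Delta_{s,m}-\Delta_{s,m'}|\le C_K\beta^{\min(m,m')}$. For the first estimate, write the perturbed measure up to time $L$ (the window in which $X_0$ appears) as a complex signed measure $\nu$ on $\mathcal W_L$. It has zero total mass, and its total variation is bounded by a constant depending only on $K$, by the domination above. After time $L$ all kernels are real. Applying Lemma~\ref{lem:signed_initial_extension} with $s=L$ to the real and imaginary parts gives the first bound for $s>2L+1$; the finitely many small $s$ are handled by the domination bound. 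For the second estimate, compare finite-history and infinite-history baseline densities using \eqref{eq:uff_entropy_history}. Here the pairing is against a complex measure rather than a probability, so I would first bound the pathwise difference via Corollary~\ref{cor:pathwise_block_contraction} and \eqref{eq:uff_log_lip}, and then integrate against $|\Lambda^{(z)}|$, whose output tails are still dominated by Poisson$(B_K)$-type tails. This step is where I expect the main obstacle to lie. The good-block counts must remain frequent under the total-variation measure $|\Lambda^{(z)}|$, not only under $\mathbb P_F$. Since $|\Lambda^{(z)}|$ differs from a genuine Poisson law only in the first $L+1$ kernels, a bounded density-ratio argument on that window should suffice. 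Given both estimates, $\Delta_{s,m}$ converges uniformly on $K$ to an analytic function $\Delta_s$ as $m\to\infty$, with $|\Delta_s|\le C_K\beta^s$; this yields \eqref{eq:delta_series_uniform_bound_new}.

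\textbf{Conclusion.} Finally, split the Cesàro average:
\[
\Bigl|D_{n,F}-\sum_{s}\Delta_s\Bigr|\le \frac1n\sum_{j}\Bigl(\sum_{s}C_K\beta^{\min(s,\,j+s-1)}+\sum_{s>n-j}C_K\beta^s\Bigr)=O(1/n)
\]
uniformly on $K$. For real $x\in[0,A]$ all measures involved are probabilities and the perturbation is feasible. Combining this with the uniform-in-$\epsilon$ convergence of $\frac1nH(Y_1^n)$ to $h(\mathbf Y)$ supplied by Lemma~\ref{lem:uniform_filter_forgetting} along the segment $F_\epsilon$, the limit of the derivatives equals the derivative of the limit. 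This identifies $\dot h^{\rm hol}_{Y,F}(x)$ as the directional derivative in the direction $\delta_x-F$.
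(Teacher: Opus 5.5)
Your proof is sound in its main line but is organized differently from the paper's. The paper counts ``interior'' pairs $(j,s)$ with multiplicities $a_{n,t}\to1$ and identifies all interior pairs of lag $t$ with a single functional $\Delta_{t,n}$. It then discards a boundary term $B_n(z)$ and finishes with a Toeplitz lemma; the Cauchy property of $\Delta_{t,m}$ in $m$ is simply attributed to Lemma~\ref{lem:signed_initial_extension}.

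Your exact identity $D_{n,F}(z)=\frac1n\sum_j\sum_s\Delta_{s,j+s-1}(z)$ leaves history truncation and tail truncation as the only error sources. Your bound (a) is the paper's \eqref{eq:delta_t_uniform_explicit}. You derive the Cauchy estimate (b) from Corollary~\ref{cor:pathwise_block_contraction} and \eqref{eq:uff_log_lip}, integrated against $|\Lambda^{(z)}|+\Lambda^{(F)}$. This buys two things:
\begin{itemize}
\item It supplies a step the paper only cites. Lemma~\ref{lem:signed_initial_extension}, as stated, controls a zero-mass perturbation propagated forward against a fixed density, not a change of history length inside the density.
\item It gives an explicit $O(1/n)$ rate with constants independent of the base law.
\end{itemize}
For the obstacle you flag, do not compare $|\Lambda^{(z)}|$ with the baseline, since that likelihood ratio is in general unbounded in the window counts. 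Compare it instead with the probability law $\widetilde P$ that replaces the $z$-dependent outputs among $Y_0^L$ by independent $\operatorname{Poisson}(R_K)$ counts and keeps the baseline law of all other outputs. By \eqref{eq:complex_kernel_envelope_pre}, $|\Lambda^{(z)}|\le e^{(L+1)(R_K-\delta_K)}\widetilde P$. Under $\widetilde P$, conditionally on the inputs, each block disjoint from $[0,L]$ is good with probability at least $c_M^N$, independently of the other blocks, so the averaged contraction goes through.

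Three repairs are needed.

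\emph{The index range.} $Y_1^n$ depends on $X_{1-L}^n$, so $j$ runs over $1-L,\ldots,n$. The $L$ omitted terms are bounded on $K$ and add only $O(1/n)$. Also, $|\log P_F(y_1^n)|$ grows like $\sum_i y_i\log(y_i+1)$, not linearly; the factorial envelope still absorbs this.

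\emph{The exponent in your final display.} For $j\ge1$ you have $\beta^{\min(s,j+s-1)}=\beta^s$, which yields only an $O(1)$ bound. Estimate (b) with $m'\to\infty$ gives $|\Delta_{s,j+s-1}-\Delta_s|\le C_K\beta^{j+s-1}$, and with this exponent the double sum is indeed $O(1/n)$.

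\emph{The derivative interchange.} Uniform-in-$\epsilon$ convergence of $\frac1nH_{F_\epsilon}(Y_1^n)$, together with convergence of the derivatives at $\epsilon=0$, does not justify differentiating the limit (consider $\frac1n\sin(n\epsilon)$). You need the derivatives to converge uniformly on some $[0,\epsilon_0]$. Your own estimates give this:
\begin{itemize}
\item $\frac{d}{d\epsilon}\frac1nH_{F_\epsilon}(Y_1^n)=(1-\epsilon)^{-1}D_{n,F_\epsilon}(x)$;
\item every $F_\epsilon$ lies in $\mathcal P([0,A])$;
\item the constants in your $O(1/n)$ bound can be taken to depend only on $(\lambda_0,A,\mathbf h,L)$ and $K$, since the average constraint is never used.
\end{itemize}
So the bound holds uniformly in $\epsilon$, and pointwise convergence of $\frac1nH_{F_\epsilon}(Y_1^n)$ then identifies $\dot h^{\rm hol}_{Y,F}(x)$ as the one-sided derivative. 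The paper's one-sentence justification of this interchange is no more detailed, so this is exactly where your base-law-uniform rate pays off.
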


\begin{proof}
For a finite block, let $P_{F,n}$ be the output distribution of $Y_1^n$ under the stationary input law $F$. Let $P_{F,n}^{(j,z)}$ be the finite-dimensional analytic continuation obtained by fixing the single input symbol $X_j=z$ and keeping all other input symbols distributed according to $F$. Since $Y_1^n$ depends only on $X_{1-L},\ldots,X_n$, differentiating the product input measure in the direction $\delta_z-F$ gives
\begin{equation}
\label{eq:block_entropy_derivative_point_mass}
\begin{aligned}
&\left.
\frac{d}{d\epsilon}H_{F_\epsilon^{(z)}}(Y_1^n)
\right|_{\epsilon=0} =
-\sum_{j=1-L}^{n}\sum_{y^n}
\bigl[P_{F,n}^{(j,z)}(y^n)-P_{F,n}(y^n)\bigr]
\log P_{F,n}(y^n).
\end{aligned}
\end{equation}
The derivative of the normalization term in the entropy formula vanishes because $\sum_{y^n}[P_{F,n}^{(j,z)}(y^n)-P_{F,n}(y^n)]=0$. By the chain rule,
\begin{equation}
\label{eq:block_chain_rule_info_density}
-
\log P_{F,n}(Y_1^n)
=
\sum_{s=1}^{n}-\log P_F(Y_s|Y_1^{s-1}).
\end{equation}
For each pair $(j,s)$ in \eqref{eq:block_entropy_derivative_point_mass} and \eqref{eq:block_chain_rule_info_density}, put $t=s-j$. If $0\le t\le n-1$ and the perturbed symbol and the information-density term are both away from the two block boundaries, stationarity identifies the corresponding contribution with a finite-history version $\Delta_{t,n}(z)$ of the influence generated by fixing $X_0=z$. The exact multiplicity of a fixed lag $t$ is
\begin{equation}
\label{eq:pair_counting_coeff}
\begin{aligned}
\mathcal A_{n,t}
&\triangleq
\{(j,s):1-L\le j\le n,\ 1\le s\le n,\ s-j=t,\quad (j,s)\text{ is an interior pair}\}.
\end{aligned}
\end{equation}
The interior restriction removes only the finitely many pairs for which the affected channel window crosses the left or right edge of the observed block. We define $a_{n,t}\triangleq\frac{|\mathcal A_{n,t}|}{n}$; then we have
\begin{equation}
\label{eq:pair_counting_properties}
0\le a_{n,t}\le1,\qquad
a_{n,t}\to1\quad\text{for each fixed }t.
\end{equation}
The removed pairs form a boundary set of cardinality at most $C_L(1+t)$ for lags $t<n$ and at most $C_L$ for each fixed edge. Their contribution, divided by $n$, is denoted by $B_n(z)$. The factorial envelope in \eqref{eq:complex_kernel_envelope_pre} below gives a summable bound for each affected finite window, uniformly on compact $K\Subset\Omega$; hence
\begin{equation}
\label{eq:boundary_term_vanishes}
\sup_{z\in K}|B_n(z)|\to0.
\end{equation}
Consequently,
\begin{equation}
\label{eq:finite_block_toeplitz_form}
\begin{aligned}
D_{n,F}(z)
&=
\sum_{t=0}^{n-1}a_{n,t}\Delta_{t,n}(z)+B_n(z).
\end{aligned}
\end{equation}

We next prove the locally uniform summability of the limiting terms. Fix $K\Subset\Omega$ and define
\begin{equation}
\label{eq:compact_domain_constants}
\begin{aligned}
\delta_K
&\triangleq
\inf_{\substack{z\in K,\,0\le r\le L:\,h_r>0\\0\le u_\ell\le A}}
\operatorname{Re}
\left(
\lambda_0+h_rz+\sum_{\ell\ne r}h_\ell u_\ell
\right)>0,\\
R_K
&\triangleq
\sup_{\substack{z\in K,\,0\le r\le L:\,h_r>0\\0\le u_\ell\le A}}
\left|
\lambda_0+h_rz+\sum_{\ell\ne r}h_\ell u_\ell
\right|<\infty .
\end{aligned}
\end{equation}
The positivity of $\delta_K$ follows from $K\Subset\Omega$. Therefore every affected complex Poisson kernel satisfies
\begin{equation}
\label{eq:complex_kernel_envelope_pre}
\left|e^{-\zeta}\frac{\zeta^y}{y!}\right|
\le
 e^{-\delta_K}\frac{R_K^y}{y!},
\qquad z\in K.
\end{equation}
This factorial envelope implies that all finite direct terms $t=0,\ldots,L$ are locally uniformly bounded on $K$.

For $t>L$, the symbol $X_0$ no longer appears in the physical intensity $Z_t$. Its influence on the information density can then pass only through the finite affected history consisting of $Y_0^L$ and the filter at time $L+1$. The envelope \eqref{eq:complex_kernel_envelope_pre} yields a finite complex signed measure $\Xi_{L+1}^{(z)}$ on that finite-history space and the bound
\begin{equation}
\label{eq:complex_initial_tv_bound}
\begin{aligned}
\sup_{z\in K}
\lVert\Xi_{L+1}^{(z)}\rVert_{\rm TV}
&\le
\widetilde C_K
\sum_{y_0^L\in\mathbb N_0^{L+1}}
\prod_{r=0}^{L}\frac{R_K^{y_r}}{y_r!} =\widetilde C_K e^{(L+1)R_K}<\infty .
\end{aligned}
\end{equation}
The measure $\Xi_{L+1}^{(z)}$ is the signed difference between the complexified finite-history law generated by the perturbed kernels and the baseline finite-history law. It has total mass zero, and the factorized bound in \eqref{eq:complex_initial_tv_bound} is independent of $F$.
From time $L+1$ onward, all kernels are ordinary real Poisson-HMM kernels. Lemma~\ref{lem:signed_initial_extension} gives
\begin{equation}
\label{eq:delta_t_uniform_explicit}
\sup_{z\in K}|\Delta_t(z)|
\le C_K\beta^{t-L},
\qquad t>L,
\end{equation}
where $\beta<1$ is independent of $K$ and $F$. Combining the finitely many direct terms with \eqref{eq:delta_t_uniform_explicit} proves \eqref{eq:delta_series_uniform_bound_new}.

For fixed finite history length $m$, the function $\Delta_{t,m}(z)$ in \eqref{eq:Delta_tm_definition} is an absolutely convergent countable sum over $y_{t-m}^t$ and an integral over the finitely many input variables that can affect this window. Each summand is a product of affine-in-$z$ Poisson kernels and the fixed real logarithmic factor $\ell_{F,m}(y_t|y_{t-m}^{t-1})$. For a fixed finite history and a fixed observation sequence, this logarithmic factor is a constant independent of $z$; therefore, after multiplication by the complex finite measure generated by the analytically continued Poisson kernels, the finite-history expression is obtained as a locally uniformly dominated countable sum of holomorphic terms. The bound \eqref{eq:complex_kernel_envelope_pre} and the estimate $|\ell_{F,m}(y_t|y_{t-m}^{t-1})|\le C+C\sum_i y_i\log(y_i+1)$ give locally uniform domination; hence $\Delta_{t,m}$ is holomorphic on $\Omega$. Lemma~\ref{lem:signed_initial_extension} shows that replacing a history of length $m$ by a longer history changes the corresponding perturbation functional by at most $C_K\beta^m$, locally uniformly on $K$. Thus $\{\Delta_{t,m}\}_{m\ge1}$ is a locally uniformly Cauchy sequence on $K$, and we define
\begin{equation}
\label{eq:Delta_t_limit_definition}
\Delta_t(z)\triangleq\lim_{m\to\infty}\Delta_{t,m}(z).
\end{equation}
The convergence is locally uniform, so $\Delta_t$ is holomorphic by Weierstrass' theorem. The interior finite-block term $\Delta_{t,n}(z)$ is exactly such a finite-history functional with a history length tending to infinity with $n$; therefore $\Delta_{t,n}(z)\to\Delta_t(z)$ locally uniformly for fixed $t$.

Finally, \eqref{eq:finite_block_toeplitz_form}, the absolute majorant \eqref{eq:delta_series_uniform_bound_new}, and the elementary Toeplitz lemma for absolutely summable sequences imply
\begin{equation}
\label{eq:finite_block_derivative_limit}
D_{n,F}(z)
\to
\sum_{t=0}^{\infty}\Delta_t(z)
\end{equation}
locally uniformly on $K$. For $x\in[0,A]$, the finite-dimensional analytic expressions are genuine probability expectations, and the locally uniform convergence above justifies passing the derivative through the entropy-rate limit. Thus the limit is the output-entropy-rate directional derivative at real feasible point masses.
\end{proof}

\begin{lemma}
\label{lem:holomorphic_influence}
For $F,G\in\mathcal P_A(P_{\rm avg})$, define $F_\epsilon\triangleq(1-\epsilon)F+\epsilon G$ and
\begin{equation}
D\mathcal I_F(G-F)
\triangleq
\left.
\frac{d}{d\epsilon}\mathcal I(F_\epsilon)
\right|_{\epsilon=0^+},
\end{equation}
whenever the derivative exists. For $x\in[0,A]$, define the feasible point-mass influence
\begin{equation}
\label{eq:psi_real_definition}
\psi_F(x)
\triangleq
\left.
\frac{d}{d\epsilon}\mathcal I((1-\epsilon)F+\epsilon\delta_x)
\right|_{\epsilon=0^+}.
\end{equation}
Then $\psi_F$ exists and is continuous on $[0,A]$. Moreover, there is a holomorphic function $\Psi_F$ on $\Omega$ such that $\Psi_F(x)=\psi_F(x)$ for $x\in[0,A]$, and for every $G\in\mathcal P_A(P_{\rm avg})$, we have
\begin{equation}
\label{eq:first_variation_integral_form}
D\mathcal I_F(G-F)
=
\int_{[0,A]}\psi_F(x)dG(x)
-
\int_{[0,A]}\psi_F(x)dF(x).
\end{equation}
\end{lemma}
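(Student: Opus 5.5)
We split $\mathcal I(F)=h_F(\mathbf Y)-\mathbb E_F[\phi(Z_0)]$ and treat the two terms separately. For real $x\in[0,A]$, Lemma~\ref{lem:finite_block_derivative} already gives the directional derivative of the output entropy rate in the direction $\delta_x-F$ as $\dot h_{Y,F}^{\rm hol}(x)=\sum_{t\ge0}\Delta_t(x)$. For the conditional term, $\mathbb E_F[\phi(\lambda_0+\sum_\ell h_\ell X_{-\ell})]$ is a multilinear functional of $F^{\otimes(L+1)}$. Its derivative in the direction $\delta_x-F$ is therefore
\[
\sum_{r=0}^{L}\Bigl(\mathbb E_F\bigl[\phi(\lambda_0+h_rx+\textstyle\sum_{\ell\ne r}h_\ell X_{-\ell})\bigr]-\mathbb E_F[\phi(Z_0)]\Bigr).
\]
We then set $\psi_F(x)=\dot h_{Y,F}^{\rm hol}(x)$ minus this sum. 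Existence of the one-sided derivative follows from the fact that $\epsilon\mapsto F^{\otimes n}_\epsilon$ is polynomial for each finite window. Passing the derivative through the rate limit is justified by the locally uniform convergence $D_{n,F}\to\dot h^{\rm hol}_{Y,F}$ in Lemma~\ref{lem:finite_block_derivative}.

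For the holomorphic extension, we take $\Psi_F(z)=\dot h_{Y,F}^{\rm hol}(z)-\sum_r(\ldots)$, where each $\phi$-term is continued analytically. The function $\phi(\zeta)=\zeta-\zeta\log\zeta+\sum_m e^{-\zeta}\zeta^m\log(m!)/m!$ is holomorphic on $\operatorname{Re}\zeta>0$ with the principal logarithm. The series converges locally uniformly by the same factorial envelope as \eqref{eq:complex_kernel_envelope_pre}, since $\log m!\le m\log m$. Substituting $\zeta=\lambda_0+h_rz+\sum_{\ell\ne r}h_\ell u_\ell$, whose real part is at least $\delta_K>0$ for $z\in K\Subset\Omega$, and integrating over $F^{\otimes L}(du)$ gives a holomorphic function by dominated convergence and Morera's theorem. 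Taps with $h_r=0$ contribute constants. The entropy-rate part is holomorphic on $\Omega$ because Lemma~\ref{lem:finite_block_derivative} gives a locally uniformly absolutely convergent sum of holomorphic $\Delta_t$, and Weierstrass' theorem applies. Since $\Psi_F=\psi_F$ on $[0,A]$ by construction, continuity of $\psi_F$ follows.

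The main obstacle is the general-direction formula \eqref{eq:first_variation_integral_form}. It requires showing that the derivative in direction $G-F$ is linear in $G$, namely the $G$-average of the point-mass influences. For finite $n$ this is immediate. The block output law under $F_\epsilon^{\otimes}$ is a polynomial in $\epsilon$, and its first-order coefficient is $\sum_j$ of the law with $X_j\sim G$ minus the baseline law. The law with $X_j\sim G$ is $\int P^{(j,x)}_{F,n}\,dG(x)$, so by Fubini the block derivative equals $n\int D_{n,F}(x)\,d(G-F)(x)$.

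To pass to the rate, we need $D_{n,F}\to\dot h^{\rm hol}_{Y,F}$ uniformly on $[0,A]$, which holds by taking $K=[0,A]\subset\Omega$. We also need to interchange $\frac{d}{d\epsilon}$ with $\lim_n\frac1n H$. For this we would show that the $\epsilon$-derivatives of $\frac1n H_{F_\epsilon}(Y_1^n)$ converge uniformly for $\epsilon\in[0,\epsilon_0]$. We apply the same Lemma~\ref{lem:finite_block_derivative} at the base point $F_\epsilon$, whose constants are uniform in $F$ by Lemma~\ref{lem:uniform_filter_forgetting} and Lemma~\ref{lem:signed_initial_extension}, together with continuity of $\psi_{F_\epsilon}$ in $\epsilon$. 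The standard theorem on uniform convergence of derivatives then gives the result.
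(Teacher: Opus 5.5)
Your proposal is correct and follows the paper's proof essentially step for step. It uses the same split into $\dot h^{\rm hol}_{Y,F}$, which is holomorphic by Lemma~\ref{lem:finite_block_derivative} and Weierstrass, and the analytically continued conditional term $\dot h^{\rm hol}_{Y|X,F}$, handled with the principal logarithm and factorial domination; for the direction $G-F$ it uses the same finite-block Fubini identity together with uniform convergence $\psi_{F,n}\to\psi_F$ on $K=[0,A]$. Your one addition is to require that the $\epsilon$-derivatives of $\frac1n H_{F_\epsilon}(Y_1^n)$ converge uniformly for $\epsilon\in[0,\epsilon_0]$, relying on the fact that the constants from Lemmas~\ref{lem:uniform_filter_forgetting} and~\ref{lem:signed_initial_extension} do not depend on the base law; this is what actually justifies exchanging $d/d\epsilon$ with the rate limit, which the paper only asserts from uniform convergence at the single base point $F$.
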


\begin{proof}
Lemma~\ref{lem:finite_block_derivative} gives a locally uniformly convergent holomorphic series for the output-entropy part. It remains to include the conditional entropy-rate part. Since $h_F(\mathbf Y|\mathbf X)=\mathbb E_F[\phi(Z_0)]$, replacing one input symbol by $z$ affects the $L+1$ output intensities containing that symbol. Define
\begin{equation}
\label{eq:conditional_influence_formula}
\begin{aligned}
\dot h_{Y|X,F}^{\rm hol}(z)
&\triangleq
\sum_{r=0}^{L}
\mathbb E_F
\left[
\phi\left(
\lambda_0+h_rz+\sum_{\ell\ne r}h_\ell X_{-\ell}
\right)
-
\phi\left(
\lambda_0+\sum_{\ell=0}^{L}h_\ell X_{-\ell}
\right)
\right].
\end{aligned}
\end{equation}
For $z\in\Omega$, each argument containing a positive tap has positive real part, and zero-tap terms are independent of $z$. The Poisson entropy function admits the analytic representation
\begin{equation}
\phi(\zeta)=\zeta-\zeta\log\zeta+
\sum_{m=0}^{\infty}e^{-\zeta}\frac{\zeta^m}{m!}\log(m!),
\end{equation}
with the principal logarithm. For every compact $K\Subset\Omega$, the arguments of $\phi$ in \eqref{eq:conditional_influence_formula} lie in a compact subset of the open right half-plane. Hence they do not meet the nonpositive real axis, and the principal logarithm is single-valued and holomorphic on a neighborhood of that compact set. On such compact subsets, the series is dominated by a factorial tail times $m\log(m+1)$. Hence $\dot h_{Y|X,F}^{\rm hol}$ is holomorphic on $\Omega$.

We define
\begin{equation}
\label{eq:Psi_def}
\Psi_F(z)
\triangleq
\dot h_{Y,F}^{\rm hol}(z)
-
\dot h_{Y|X,F}^{\rm hol}(z),
\qquad z\in\Omega .
\end{equation}
By Lemma~\ref{lem:finite_block_derivative} and the preceding paragraph, $\Psi_F$ is holomorphic. For real $x\in[0,A]$, both terms are genuine directional derivatives, so $\Psi_F(x)=\psi_F(x)$. Continuity of $\psi_F$ on $[0,A]$ follows from the continuity of the holomorphic function $\Psi_F$.

For a general signed direction $G-F$, the finite-block derivative is obtained by integrating the point-mass derivative with respect to the finite signed measure $G-F$. To make the passage to the entropy rate explicit, let $\psi_{F,n}(x)$ denote the corresponding normalized finite-block influence. Lemma~\ref{lem:finite_block_derivative}, applied on the compact set $[0,A]\subset\Omega$, gives
\begin{equation}
\label{eq:finite_influence_uniform_convergence}
\sup_{x\in[0,A]}|\psi_{F,n}(x)-\psi_F(x)|\to0 .
\end{equation}
Hence, for $\eta=G-F$,
\begin{equation}
\label{eq:signed_direction_uniform_integral}
\begin{aligned}
&\left|\int\psi_{F,n}(x)d\eta(x)
      -\int\psi_F(x)d\eta(x)\right| n\le\lVert\eta\rVert_{\rm TV}
\sup_{x\in[0,A]}|\psi_{F,n}(x)-\psi_F(x)|\to0 .
\end{aligned}
\end{equation}
The derivative with respect to $\epsilon$ is finite-dimensional before taking the limit, while the uniform convergence in \eqref{eq:finite_influence_uniform_convergence} justifies the interchange of the entropy-rate limit, the time summation, and the integration over the signed direction. Passing to the entropy-rate limit gives the integral representation \eqref{eq:first_variation_integral_form}.
\end{proof}

\subsection{KKT Condition for the i.i.d. Marginal Law}

We now derive the KKT condition from the first-variation formula. The boundary cases of the average constraint are included for completeness.

\begin{lemma}
\label{lem:kkt}
Assume $P_{\rm avg}>0$, and let $F^\star$ be a maximizer of $\mathcal I(F)$ over $\mathcal P_A(P_{\rm avg})$. Then there exist constants $\lambda\ge0$ and $\gamma\in\mathbb R$ such that
\begin{equation}
\label{eq:kkt_inequality_app}
\psi_{F^\star}(x)-\lambda x\le\gamma,
\qquad x\in[0,A],
\end{equation}
and
\begin{equation}
\label{eq:kkt_equality_support_app}
\psi_{F^\star}(x)-\lambda x=\gamma,
\qquad x\in\operatorname{supp}(F^\star).
\end{equation}
Moreover,
\begin{equation}
\label{eq:kkt_complementary_slackness_app}
\lambda\left(\int x\,dF^\star(x)-P_{\rm avg}\right)=0.
\end{equation}
\end{lemma}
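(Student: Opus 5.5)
Since $\mathcal P_A(P_{\rm avg})$ is convex and $F^\star$ maximizes $\mathcal I$ over it, every feasible direction has nonpositive one-sided derivative. By Lemma~\ref{lem:holomorphic_influence}, for every $G\in\mathcal P_A(P_{\rm avg})$,
\begin{equation*}
\int_{[0,A]}\psi_{F^\star}\,dG\le\int_{[0,A]}\psi_{F^\star}\,dF^\star\triangleq\bar\psi .
\end{equation*}
Hence $F^\star$ maximizes the \emph{linear} functional $G\mapsto\int\psi_{F^\star}dG$ over the same constraint set, with $\psi_{F^\star}$ continuous on $[0,A]$. The nonlinearity of $\mathcal I$ and the HMM structure have therefore disappeared. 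What remains is a Lagrangian duality statement for a linear program over probability measures with one moment constraint. I would prove it directly rather than cite an abstract KKT theorem.

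\textbf{Case 1: the average constraint is slack or not binding.} Suppose $\psi_{F^\star}(x)\le\bar\psi$ for all $x\in[0,A]$. Then take $\lambda=0$ and $\gamma=\bar\psi$, so \eqref{eq:kkt_inequality_app} holds. Since $\int(\bar\psi-\psi_{F^\star})dF^\star=0$ and the integrand is nonnegative and continuous, it vanishes on $\operatorname{supp}(F^\star)$, which gives \eqref{eq:kkt_equality_support_app}. Complementary slackness \eqref{eq:kkt_complementary_slackness_app} is trivial with $\lambda=0$. If $\int x\,dF^\star<P_{\rm avg}$, this case must occur. Indeed, for any $x_0$, the perturbation $(1-\epsilon)F^\star+\epsilon\delta_{x_0}$ is feasible for small $\epsilon$, so $\psi_{F^\star}(x_0)\le\bar\psi$.

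\textbf{Case 2: the constraint binds, $\int x\,dF^\star=P_{\rm avg}$.} Define
\begin{equation*}
\lambda\triangleq\sup_{x\in(P_{\rm avg},A]}\frac{\psi_{F^\star}(x)-\bar\psi}{x-P_{\rm avg}}\vee 0 .
\end{equation*}
To show $\lambda<\infty$, mix $\delta_x$ for $x>P_{\rm avg}$ with $\delta_0$ (or any point below $P_{\rm avg}$, which exists since $P_{\rm avg}>0$ when $P_{\rm avg}<A$; if $P_{\rm avg}\ge A$ the constraint is vacuous and we are in Case 1). Two-point laws $G=p\delta_x+(1-p)\delta_{x'}$ with mean $P_{\rm avg}$ are feasible. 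Applying $\int\psi dG\le\bar\psi$ to them gives, for all $x'<P_{\rm avg}<x$,
\begin{equation*}
\frac{\psi(x)-\bar\psi}{x-P_{\rm avg}}\le\frac{\bar\psi-\psi(x')}{P_{\rm avg}-x'} .
\end{equation*}
So the supremum defining $\lambda$ is dominated by the infimum over $x'<P_{\rm avg}$ of the right-hand side. The points $x=P_{\rm avg}$ are handled by taking $G=\delta_{P_{\rm avg}}$.

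Then set $\gamma=\bar\psi-\lambda P_{\rm avg}$. The separating inequality above gives $\psi(x)-\lambda x\le\gamma$ on both sides of $P_{\rm avg}$. This is the one-dimensional separating-hyperplane argument: the set $\{(\int x\,dG,\int\psi\,dG)\}$ is the convex hull of the curve $(x,\psi(x))$, and $(P_{\rm avg},\bar\psi)$ lies on its upper boundary. Integrating \eqref{eq:kkt_inequality_app} against $F^\star$ gives $\bar\psi-\lambda P_{\rm avg}\le\gamma$, with equality. So $\gamma+\lambda x-\psi_{F^\star}(x)$ is nonnegative, continuous, and has zero $F^\star$-integral, which yields \eqref{eq:kkt_equality_support_app}. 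Complementary slackness holds because the constraint binds.

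\textbf{Main obstacle.} The delicate step is showing $\lambda<\infty$ and $\lambda\ge0$ in the binding case. This needs feasible points strictly below $P_{\rm avg}$, which is exactly where $P_{\rm avg}>0$ is used, together with continuity of $\psi_{F^\star}$ to treat the boundary point $x=P_{\rm avg}$. A cleaner alternative is to invoke the Lagrange duality theorem for convex programs on $\mathcal P([0,A])$ under Slater's condition: $\delta_0$ is strictly feasible when $P_{\rm avg}>0$. I would use the explicit two-point argument above to keep the proof self-contained.
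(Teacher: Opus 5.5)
Your proof is correct. It reaches the conclusion by a more elementary route than the paper. Both arguments share the same first and last steps. Each starts from the variational inequality $\int\psi_{F^\star}\,dG\le\int\psi_{F^\star}\,dF^\star$ for feasible $G$, supplied by Lemma~\ref{lem:holomorphic_influence}. Each ends by integrating the Lagrangian inequality against $F^\star$ and using continuity of $\psi_{F^\star}$ to pass from $F^\star$-a.e.\ equality to equality on the closed support.

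The difference is how the multiplier is produced. The paper treats $P_{\rm avg}\ge A$ separately. For $0<P_{\rm avg}<A$ it cites a normal-cone/KKT theorem for convex programs under Slater's condition, with $\delta_0$ as the strictly feasible point. That theorem yields $\lambda\ge0$, the Lagrangian inequality over all of $\mathcal P([0,A])$, and complementary slackness in one step.

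You instead split into slack and binding cases. In the binding case you construct $\lambda$ explicitly as the largest chord slope from $(P_{\rm avg},\bar\psi)$ to the graph of $\psi_{F^\star}$ on $(P_{\rm avg},A]$. You then bound it using feasible two-point laws with mean $P_{\rm avg}$. This is the planar separating-hyperplane argument that underlies the abstract theorem, specialized to a single affine constraint.

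The paper's route is shorter. Yours is self-contained and gives an explicit formula for the multiplier. It also does not lean on an infinite-dimensional Lagrange duality statement. Finally, it shows exactly where $P_{\rm avg}>0$ is used: a feasible point strictly below $P_{\rm avg}$ keeps $\lambda$ finite.

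One small step should be stated explicitly. Suppose the supremum of chord slopes is negative, so that $\lambda=0$. Then the required inequality $\psi_{F^\star}(x)\le\bar\psi$ for $x<P_{\rm avg}$ does not come from the chord inequality. It comes from the feasibility of $\delta_x$ for every $x\le P_{\rm avg}$, which is exactly the argument you already used at $x=P_{\rm avg}$.
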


\begin{proof}
If $P_{\rm avg}\ge A$, the average constraint is redundant because every probability law on $[0,A]$ satisfies $\int x\,dF(x)\le A\le P_{\rm avg}$. In this case the feasible set is simply $\mathcal P([0,A])$. Since $F^\star$ is optimal, Lemma~\ref{lem:holomorphic_influence} gives
\begin{equation}
\int\psi_{F^\star}(x)dG(x)
\le
\int\psi_{F^\star}(x)dF^\star(x),
\qquad G\in\mathcal P([0,A]).
\end{equation}
Thus \eqref{eq:kkt_inequality_app}-\eqref{eq:kkt_complementary_slackness_app} hold with $\lambda=0$ and $\gamma=\int\psi_{F^\star}dF^\star$.

It remains to consider $0<P_{\rm avg}<A$. The feasible set has a strict interior point for the average constraint: the law $\delta_0$ satisfies $\int x\,d\delta_0(x)=0<P_{\rm avg}$. This is the Slater condition for the single affine inequality constraint. By Lemma~\ref{lem:holomorphic_influence}, $\psi_{F^\star}$ is continuous and bounded on $[0,A]$, so $G\mapsto\int\psi_{F^\star}dG$ is weakly continuous. The variational inequality at the maximizer is
\begin{equation}
\label{eq:kkt_variational_pre_slater}
\int\psi_{F^\star}(x)dG(x)
\le
\int\psi_{F^\star}(x)dF^\star(x),
\qquad G\in\mathcal P_A(P_{\rm avg}).
\end{equation}
The normal-cone/KKT theorem for convex programs with a continuous affine inequality constraint and Slater's condition~\cite{rockafellar1970convex} gives a multiplier $\lambda\ge0$ such that
\begin{equation}
\label{eq:normal_cone_condition}
\int(\psi_{F^\star}(x)-\lambda x)dG(x)
\le
\int(\psi_{F^\star}(x)-\lambda x)dF^\star(x)
\end{equation}
for every probability measure $G$ on $[0,A]$, together with complementary slackness \eqref{eq:kkt_complementary_slackness_app}. Define
\begin{equation}
\gamma\triangleq
\int(\psi_{F^\star}(u)-\lambda u)dF^\star(u).
\end{equation}
Taking $G=\delta_x$ in \eqref{eq:normal_cone_condition} gives \eqref{eq:kkt_inequality_app}. Integrating \eqref{eq:kkt_inequality_app} with respect to $F^\star$ gives equality, hence $\psi_{F^\star}(x)-\lambda x=\gamma$ for $F^\star$-almost every $x$. Since $\psi_{F^\star}$ is continuous, the equality extends to the closed support of $F^\star$. This proves \eqref{eq:kkt_equality_support_app}.
\end{proof}

\subsection{Supralinear Growth of the Holomorphic Extension}

It remains to prove the growth estimate that prevents the KKT defect from vanishing identically. The estimate is stated for $\Psi_F$, the holomorphic extension introduced in Lemma~\ref{lem:holomorphic_influence}. For $x>A$, $\Psi_F(x)$ is an analytic-continuation value rather than a feasible point-mass derivative.

\begin{lemma}
\label{lem:supralinear_growth}
For every $F\in\mathcal P_A(P_{\rm avg})$, we have
\begin{equation}
\label{eq:Psi_supralinear_statement}
\Psi_F(x)\ge x\log x-O(x),
\qquad x\to+\infty.
\end{equation}
Consequently, for every finite $\lambda\ge0$,
$\Psi_F(x)-\lambda x\to+\infty$ as $x\to+\infty$.
\end{lemma}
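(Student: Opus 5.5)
Write $\Psi_F=\dot h_{Y,F}^{\rm hol}-\dot h_{Y|X,F}^{\rm hol}$ as in \eqref{eq:Psi_def} and treat the two parts separately on the real axis $x>A$. All Poisson kernels there are still genuine (real, nonnegative) probability kernels, since the affected means $\lambda_0+h_rx+\sum_{\ell\ne r}h_\ell X_{-\ell}$ are positive. Hence for real $x$ the finite-history pairings $\Delta_{t,m}(x)$ are ordinary expectations under the experiment ``$X_0=x$, other inputs i.i.d.\ $F$''. The analytic continuation to $x>A$ is only a formula for a true but infeasible experiment.

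\emph{Conditional-entropy part.} Since $\phi(\zeta)=\tfrac12\log(2\pi e\zeta)+O(1/\zeta)$, each term of \eqref{eq:conditional_influence_formula} with $h_r>0$ grows like $\tfrac12\log x$. Thus $\dot h_{Y|X,F}^{\rm hol}(x)=O(\log x)$. This part is harmless.

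\emph{Output-entropy part.} Split $\dot h_{Y,F}^{\rm hol}(x)=\sum_{t=0}^{L}\Delta_t(x)+\sum_{t>L}\Delta_t(x)$.

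For the tail $t>L$, rerun the argument of Lemma~\ref{lem:signed_initial_extension}. By \eqref{eq:signed_history_centering}, $\Delta_t(x)$ is the integral of $R_t(w)-R_t(w_0)$ against the zero-mass signed measure $\Xi^{(x)}_{L+1}$. Since $\Xi^{(x)}_{L+1}$ is now a difference of two probability measures, its total variation is at most $2$ uniformly in $x$. The bound \eqref{eq:finite_history_Rt_difference} is uniform over all finite histories. Hence $\sum_{t>L}|\Delta_t(x)|\le C'$, independent of $x$. The earlier factorial-envelope constant $e^{(L+1)R_K}$ is not needed on the real axis.

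For the direct terms $0\le t\le L$, $\Delta_t(x)=\mathbb E^{(x)}[\ell_F(Y_t|Y_{<t})]-\mathbb E_F[\ell_F(Y_t|Y_{<t})]$. The second expectation is bounded uniformly in $F$ by \eqref{eq:uff_predictive_bounds}. For the first, the lower bound in \eqref{eq:uff_predictive_bounds} gives
\[
\ell_F(y|\cdot)=-\log p_\pi(y)\ge \log y!-y\log B_A+\lambda_0 .
\]
Take $t=r$ with $h_r=h_+>0$. Under $X_0=x$, $Y_r$ is Poisson with mean $\zeta\ge h_+x$. By Jensen, or Stirling together with Poisson concentration, $\mathbb E\log Y_r!\ge \zeta\log\zeta-\zeta-O(\log\zeta)$, while $\mathbb E Y_r\log B_A=O(x)$. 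This gives $\Delta_r(x)\ge h_+x\log x-O(x)$. Every other direct term is bounded below by $-O(x)$, by the same inequality and $\log y!\ge0$. Summing the three pieces yields $\Psi_F(x)\ge h_+x\log x-O(x)$, which is supralinear.

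The statement writes $x\log x$ with coefficient $1$. To recover that constant I would add up all $t$ with $h_t>0$, using $\sum_r h_r\,x\log(h_rx)=x\log x-O(x)$. This follows from $\sum_r h_r=1$, because then $\sum_r h_r x\log(h_r x)=x\log x+x\sum_r h_r\log h_r$ and the last sum is a constant. Either coefficient suffices for the consequence: $\Psi_F(x)-\lambda x\to+\infty$ for every fixed $\lambda$.

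\emph{Main obstacle.} The hardest step is justifying that the holomorphic $\Psi_F$ built in Lemma~\ref{lem:finite_block_derivative} agrees, for real $x>A$, with the probabilistic expression used above. Identifying the locally uniform limit $\Delta_t=\lim_m\Delta_{t,m}$ with the infinite-history expectation needs the uniform-in-$x$ tail bound. I would get it by noting that for real $z$ the complex measure is a genuine probability difference. Then Lemma~\ref{lem:signed_initial_extension} applies with $\lVert\nu\rVert_{\rm TV}\le2$ on the half-line, and the identity between the limit and the expectation follows by dominated convergence at each fixed $x$.
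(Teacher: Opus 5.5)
Your proposal is correct and follows essentially the paper's proof. It uses the same split $\Psi_F=\dot h^{\rm hol}_{Y,F}-\dot h^{\rm hol}_{Y|X,F}$, the pointwise bound $-\log P_F(y|\cdot)\ge\lambda_0-y\log B_A+\log(y!)$ on the direct terms $0\le t\le L$ combined with $\sum_r h_r=1$, the bound $\lVert\Xi^{(x)}_{L+1}\rVert_{\rm TV}\le 2$ together with Lemma~\ref{lem:signed_initial_extension} for the tail, and $\phi(\mu)=\frac12\log(2\pi e\mu)+O(\mu^{-1})$ for the $O(\log x)$ conditional part. The only differences are minor: you use Jensen via convexity of $\log\Gamma$ where the paper uses $\log(y!)\ge y\log y-y$ plus a Chernoff bound, and you isolate the $h_+$ term before summing to get coefficient $1$.
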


\begin{proof}
The proof evaluates the analytic continuation on the positive real axis. Although the point $x>A$ is outside the feasible input alphabet, the finite-dimensional Poisson kernels defining $\Psi_F(x)$ are ordinary real Poisson kernels and are therefore probabilistically interpretable as a comparison experiment in which one symbol is set to the larger value $x$.

Under the baseline input law $F$, every channel intensity satisfies $\lambda_0\le Z_t\le B_A$. Hence, for every output history $Y_{<t}$, we have
\begin{equation}
\label{eq:baseline_predictive_upper}
P_F(Y_t=y|Y_{<t})
\le
 e^{-\lambda_0}\frac{B_A^y}{y!} .
\end{equation}
Therefore $-\log P_F(Y_t=y|Y_{<t})\ge \lambda_0-y\log B_A+\log(y!)$.

Set the distinguished symbol $X_0$ equal to $x$ and keep all other symbols i.i.d. according to $F$. For $r=0,1,\ldots,L$ with $h_r>0$, the output $Y_r$ has conditional Poisson mean
\begin{equation}
\mu_r(x)=\lambda_0+h_rx+\sum_{\ell\ne r}h_\ell X_{r-\ell}
=h_rx+O(1),
\qquad x\to+\infty,
\end{equation}
uniformly over the remaining bounded input symbols. Using \eqref{eq:baseline_predictive_upper} under this comparison experiment gives
\begin{equation}
\begin{aligned}
&\mathbb E_F^{(0,x)}[-\log P_F(Y_r|Y_{<r})] \ge
\lambda_0-(\log B_A)\mathbb E_F^{(0,x)}[Y_r]
+\mathbb E_F^{(0,x)}[\log(Y_r!)] .
\end{aligned}
\end{equation}
Stirling's formula and standard Poisson moment estimates imply
\begin{equation}
\label{eq:poisson_log_factorial_moment}
\begin{aligned}
\mathbb E_F^{(0,x)}[\log(Y_r!)]
&=\mu_r(x)\log\mu_r(x)+O(\mu_r(x))\\
&=h_rx\log x+O(x).
\end{aligned}
\end{equation}
Indeed, the inequality $\log(y!)\ge y\log y-y$ for $y\ge1$, together with the Chernoff bound $\mathsf P\{Y_r<\mu_r(x)/2\}\le e^{-c\mu_r(x)}$ and $\mathbb E[Y_r]=\mu_r(x)$, gives the lower estimate $\mathbb E[\log(Y_r!)]\ge \mu_r(x)\log\mu_r(x)-O(\mu_r(x))$. The matching upper estimate follows from the standard Stirling upper bound and $\mathbb E[Y_r\log(Y_r+1)]=\mu_r(x)\log\mu_r(x)+O(\mu_r(x))$. The constants are uniform over the remaining input symbols because $\mu_r(x)=h_rx+O(1)$ uniformly in those bounded symbols.
Since $\mathbb E_F^{(0,x)}[Y_r]=h_rx+O(1)$, the direct contribution satisfies
\begin{equation}
\Delta_r(x)
\ge
h_rx\log x-O(x),
\qquad r=0,\ldots,L.
\end{equation}
Summing over the directly affected outputs and using $\sum_{r=0}^L h_r=1$ gives
\begin{equation}
\label{eq:direct_terms_sum_growth}
\sum_{r=0}^{L}\Delta_r(x)
\ge
x\log x-O(x).
\end{equation}

For $t>L$, the large symbol $X_0=x$ no longer appears in the physical intensity $Z_t$. Its future influence is only through the finite history generated by the affected observations $Y_0^L$ and the filter at time $L+1$. For positive real $x$, the perturbed affected-history law is an ordinary probability law, and the baseline affected-history law is also a probability law; their signed difference therefore has total mass zero and total variation at most $2$. From time $L+1$ onward, all kernels are again ordinary real kernels with means in $[\lambda_0,B_A]$. Lemma~\ref{lem:signed_initial_extension} therefore gives
\begin{equation}
\label{eq:tail_growth_bound}
|\Delta_t(x)|\le C\beta^{t-L},
\qquad t>L,
\end{equation}
where $C<\infty$ and $0<\beta<1$ do not depend on $x$ or $F$. Hence $\sum_{t>L}\Delta_t(x)=O(1)$, and
\begin{equation}
\label{eq:output_entropy_influence_growth}
\dot h_{Y,F}^{\rm hol}(x)
=\sum_{t=0}^{\infty}\Delta_t(x)
\ge x\log x-O(x).
\end{equation}

The conditional entropy-rate part grows only logarithmically. Terms with $h_r=0$ are independent of $x$. For $h_r>0$, the corresponding argument in \eqref{eq:conditional_influence_formula} equals $h_rx+O(1)$ uniformly over the bounded remaining input symbols, and the entropy of a Poisson random variable satisfies
$\phi(\mu)=\frac12\log(2\pi e\mu)+O(\mu^{-1})$ as $\mu\to+\infty$. Each positive-tap term is therefore $O(\log x)$, and there are only $L+1$ terms. Hence
\begin{equation}
\label{eq:conditional_influence_log_growth}
\dot h_{Y|X,F}^{\rm hol}(x)=O(\log x).
\end{equation}
Combining \eqref{eq:output_entropy_influence_growth} and \eqref{eq:conditional_influence_log_growth} with $\Psi_F=\dot h_{Y,F}^{\rm hol}-\dot h_{Y|X,F}^{\rm hol}$ proves \eqref{eq:Psi_supralinear_statement}. The final claim follows immediately after subtracting any finite linear term $\lambda x$.
\end{proof}

\subsection{Completion of the Proof of Theorem~\ref{thm:finite_support}}

If $P_{\rm avg}=0$, the only feasible input law is $\delta_0$, so the theorem is immediate. Assume henceforth that $P_{\rm avg}>0$. Let $F^\star$ be an i.i.d.-capacity-achieving distribution, whose existence is guaranteed by Lemma~\ref{lem:existence}. By Lemma~\ref{lem:kkt}, there exist $\lambda\ge0$ and $\gamma\in\mathbb R$ such that $\psi_{F^\star}(x)-\lambda x\le\gamma$ on $[0,A]$ and equality holds on $\operatorname{supp}(F^\star)$.

Define the holomorphic KKT defect
\begin{equation}
J(z)\triangleq\Psi_{F^\star}(z)-\lambda z-\gamma,
\qquad z\in\Omega .
\end{equation}
On the feasible interval, $\Psi_{F^\star}(x)=\psi_{F^\star}(x)$; hence $J(x)=0$ for every $x\in\operatorname{supp}(F^\star)$. If $\operatorname{supp}(F^\star)$ were infinite, compactness of $[0,A]$ would give an accumulation point $x_0\in[0,A]\subset\Omega$. The identity theorem for holomorphic functions would then imply $J(z)\equiv0$ on $\Omega$. In particular,
\begin{equation}
\label{eq:J_zero_positive_axis}
J(x)=0,
\qquad x\ge0.
\end{equation}
However, Lemma~\ref{lem:supralinear_growth} gives
\begin{equation}
J(x)=\Psi_{F^\star}(x)-\lambda x-\gamma
\ge x\log x-O(x)-\lambda x-\gamma\to+\infty,
\end{equation}
which contradicts \eqref{eq:J_zero_positive_axis}. Hence $\operatorname{supp}(F^\star)$ cannot be infinite. Every i.i.d.-capacity-achieving input distribution therefore has finite support on $[0,A]$.

\section{Numerical Results and Discussion}\label{sec:numerical_results}

\begin{table}[t]
\caption{Finite-grid parameters used in the numerical studies.}
\label{tab:numerical_parameters}
\centering
\footnotesize
\begin{tabular}{@{}p{0.34\columnwidth}@{\hspace{0.02\columnwidth}}p{0.58\columnwidth}@{}}
\toprule
Quantity & Value used in the plots \\
\midrule
Memoryless preliminary grid & $1501$ input points \\
Memoryless refinement & Continuous SQP over at most $6$ active mass points, $10$ starts \\
ISI grid size & $31$ input points \\
ISI forward block length & $N_b=3$ \\
Main ISI taps & $h=(0.65,0.35)$ \\
SQP starts for ISI plots & $16$ starts; $8$ starts for each curve in Fig.~\ref{fig:parameter_sensitivity_refined}(a) \\
Stopping tolerance & $10^{-11}$ for memoryless BA; $10^{-9}$ for ISI SQP objective \\
Active-mass display threshold & $10^{-4}$ \\
\bottomrule
\end{tabular}
\end{table}

In this section, we present finite-grid numerical studies for the i.i.d. finite-memory Poisson channel. These simulations are used to illustrate performance trends after the continuous input interval is discretized. They should not be interpreted as a numerical proof of Theorem~\ref{thm:finite_support}, since Theorem~\ref{thm:finite_support} is a continuous-alphabet structural result. In each simulation, the interval $[0,A]$ is quantized into a uniform grid and the resulting finite-alphabet problem is optimized under the average constraint. For the memoryless reference channel, we first use a dense-grid cost-constrained Blahut-Arimoto (BA) iteration to locate active clusters, and then refine the active mass locations and probabilities by continuous sequential quadratic programming (SQP). This refinement is used only in the memoryless support-evolution plot to avoid displaying one interior mass point as several neighboring grid points. For the two-tap ISI examples, the output entropy rate is approximated by a finite-block forward recursion for the hidden Markov output process. For a fixed i.i.d. input law, the recursion evaluates
\begin{equation}
\label{eq:numerical_block_rate}
\begin{aligned}
 \widehat{\mathcal I}_{N_b}(F)
 &\triangleq
 H_F(Y_{N_b}|Y_1^{N_b-1})
 -H_F(Y_{N_b}|X_{N_b},X_{N_b-1}),
\end{aligned}
\end{equation}
where $N_b$ is the finite block length used in the forward recursion. The discretized probabilities are optimized under the average constraint by sequential quadratic programming with multiple feasible initial points, including endpoint distributions and random feasible distributions. The best final value is retained. Photon counts are truncated to $\{0,1,\ldots,N_{\max}\}$, where
\begin{equation}
\label{eq:numerical_truncation_rule}
\begin{aligned}
N_{\max}
=
\max\left\{
35,
\left\lceil
\mu_{\max}+10\sqrt{\max\{\mu_{\max},1\}}
\right\rceil
\right\},
\end{aligned}
\end{equation}
$\mu_{\max}$ is the largest possible Poisson mean in the corresponding finite-grid computation, and each truncated Poisson column is renormalized. Unless otherwise specified, the finite-grid computations use the truncation and optimization settings summarized in Table~\ref{tab:numerical_parameters}.

Fig.~\ref{fig:support_evolution_refined} shows the active normalized amplitudes $x/A$ obtained from the finite-grid optimizers when the peak constraint varies over $A=0,1,\ldots,18$. The memoryless reference case is shown in Fig.~\ref{fig:support_evolution_refined}(a), while the two-tap ISI case with $h=(0.65,0.35)$ is shown in Fig.~\ref{fig:support_evolution_refined}(b). In both cases, only a small number of grid amplitudes remain active after optimization. The two endpoints $x/A=0$ and $x/A=1$ are selected throughout the displayed range, whereas interior amplitudes appear only for some peak-intensity values. Compared with the memoryless reference, the considered ISI case exhibits fewer visible interior branches.

\begin{figure}[t]
 \centering
 \subfigure[Memoryless reference with $\lambda_0=1$ and $P_{\rm avg}=0.5A$.]{%
 \includegraphics[width=0.58\textwidth]{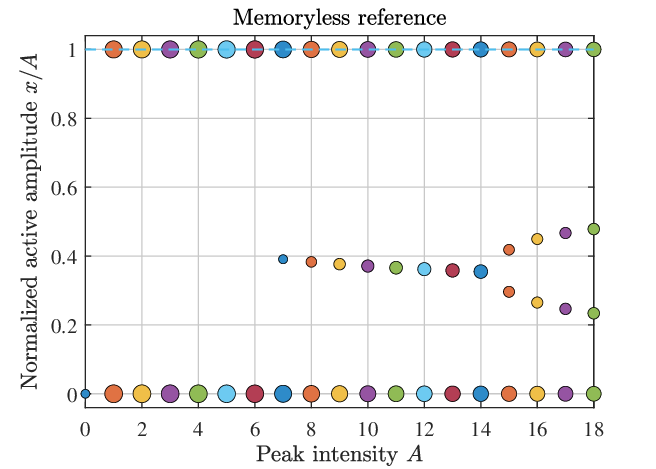}%
 \label{fig:support_evolution_refined_a}}
 \hfil
 \subfigure[Two-tap ISI with $h=(0.65,0.35)$, $\lambda_0=1$, and $P_{\rm avg}=0.5A$.]{%
 \includegraphics[width=0.58\textwidth]{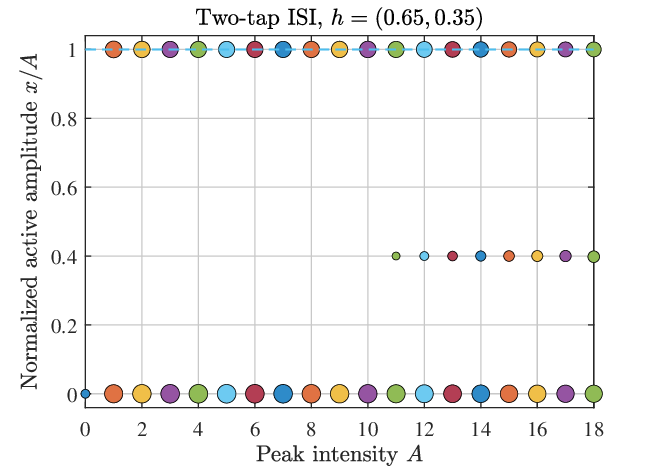}%
 \label{fig:support_evolution_refined_b}}
 \caption{Finite-grid active-amplitude evolution for $A=0,1,\ldots,18$. The vertical axis is normalized as $x/A$, and marker size indicates optimized probability mass.}
 \label{fig:support_evolution_refined}
\end{figure}

Fig.~\ref{fig:parameter_sensitivity_refined} compares different physical parameters in the two-tap ISI setting. In Fig.~\ref{fig:parameter_sensitivity_refined}(a), the finite-grid rate estimate increases with the peak intensity and decreases when the dark current $\lambda_0$ becomes larger. This is consistent with the reduced signal-to-background contrast under stronger background noise. In Fig.~\ref{fig:parameter_sensitivity_refined}(b), the peak intensity, dark current, and average-to-peak ratio are fixed, while the two-tap profile is changed. The rate estimate decreases as the channel memory becomes more spread. The annotations report the number $K$ of active grid amplitudes and the tap energy $H_2\triangleq\sum_{\ell}h_\ell^2$.

\begin{figure}[t]
 \centering
 \subfigure[Rate estimate versus peak intensity for $h=(0.65,0.35)$, $P_{\rm avg}=0.5A$, and several dark-current levels.]{%
 \includegraphics[width=0.58\textwidth]{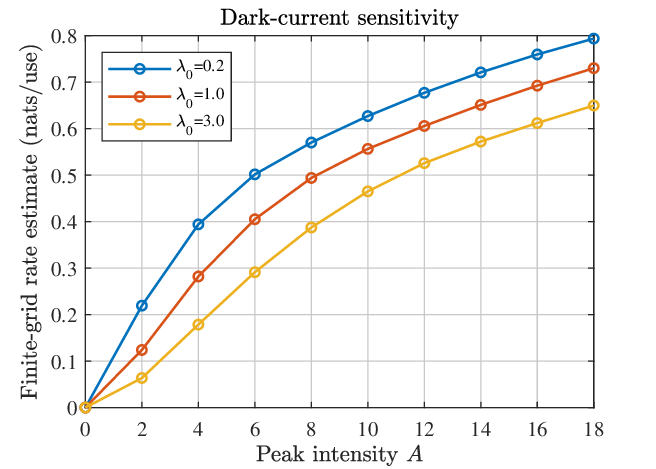}%
 \label{fig:parameter_sensitivity_refined_a}}
 \hfil
 \subfigure[Rate estimate versus tap profile at $A=18$, $\lambda_0=1$, and $P_{\rm avg}=0.5A$.]{%
 \includegraphics[width=0.58\textwidth]{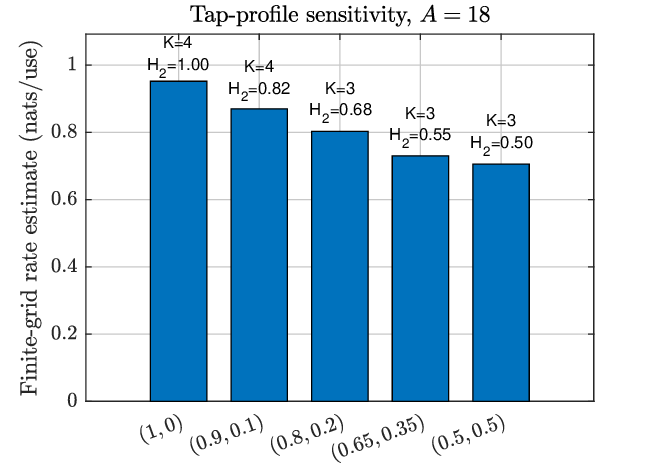}%
 \label{fig:parameter_sensitivity_refined_b}}
 \caption{Parameter sensitivity of the discretized i.i.d. finite-memory Poisson optimization.}
 \label{fig:parameter_sensitivity_refined}
\end{figure}

Fig.~\ref{fig:endpoint_behavior_refined} further studies the endpoint behavior for $A=15$ and $h=(0.65,0.35)$. The endpoint mass in Fig.~\ref{fig:endpoint_behavior_refined}(a) is the optimized probability assigned to the two endpoint grid points, namely $p(0)+p(A)$. The remaining probability is regarded as the interior mass. For the tested average-to-peak ratios, the optimized finite-grid laws allocate most of their mass to the two endpoints, and the endpoint mass becomes larger when the dark current increases. Fig.~\ref{fig:endpoint_behavior_refined}(b) compares the finite-grid optimum with two endpoint-only benchmarks. The optimized endpoint law maximizes the finite-block estimate over all endpoint distributions satisfying $0\le p(A)\le P_{\rm avg}/A$, whereas the full-average endpoint law fixes $p(A)=P_{\rm avg}/A$. The optimized endpoint benchmark is close to the finite-grid optimum over a large part of the tested range, especially for larger $\lambda_0$. The full-average endpoint law, however, can lose performance when the average constraint is loose. This indicates that endpoint support and full average-power usage are different numerical phenomena.

\begin{figure}[t]
 \centering
 \subfigure[Endpoint and interior mass of the finite-grid optimizer.]{%
 \includegraphics[width=0.58\textwidth]{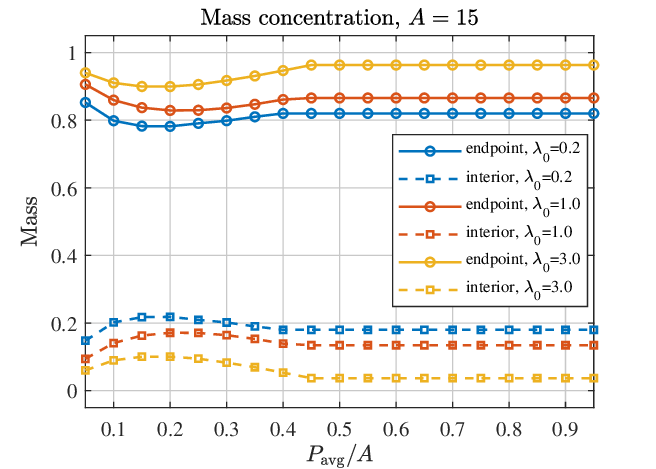}%
 \label{fig:endpoint_behavior_refined_a}}
 \hfil
 \subfigure[Endpoint-signaling benchmark relative to the finite-grid optimizer.]{%
 \includegraphics[width=0.58\textwidth]{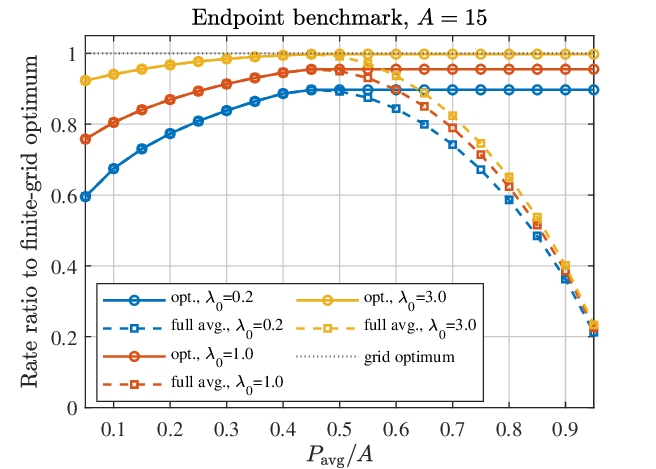}%
 \label{fig:endpoint_behavior_refined_b}}
 \caption{Endpoint-mass and endpoint-benchmark behavior in a two-tap ISI example with $A=15$, $h=(0.65,0.35)$, and $\lambda_0\in\{0.2,1.0,3.0\}$. The optimized endpoint benchmark maximizes over endpoint laws satisfying the average constraint, while the full-average endpoint benchmark enforces $p(A)=P_{\rm avg}/A$.}
 \label{fig:endpoint_behavior_refined}
\end{figure}

The numerical results lead to three observations for the discretized i.i.d. optimization. First, the optimized finite-grid distributions are sparse and often strongly concentrated on the endpoints. Second, the active grid set and the finite-grid rate estimate are sensitive to the dark current, peak constraint, and ISI tap profile. Third, optimized endpoint signaling can be close to the finite-grid optimum in the tested two-tap examples, while forcing the endpoint law to use the full average power can be suboptimal. These observations are consistent with the finite-support structural result in Theorem~\ref{thm:finite_support}, but they rely on finite-grid and finite-block approximations and are not an independent verification of the continuous-alphabet theorem.

\section{Conclusion}\label{sec:conclusion}

In this paper, we studied the i.i.d.-constrained capacity problem of finite-memory Poisson channels under joint peak and average optical-intensity constraints. By working directly with the stationary mutual information rate, we proved that every optimizing marginal input law has finite support, although the ISI channel induces a continuous-state hidden Markov output process. The key step is a uniform filtering-forgetting estimate: after a point-mass perturbation affects a finite physical output window, its remaining influence is propagated only through predictive filters whose dependence on remote initial conditions decays exponentially. This estimate justifies the entropy-rate first variation and leads to a holomorphic KKT defect function. The supralinear growth of this defect on the positive real axis then rules out infinitely many support points. Therefore, within the i.i.d. input class, the search for capacity-achieving marginal laws is structurally restricted to finite-support signaling distributions.

\appendices

\section{Measure-Theoretic Conventions}
\label{app:measure_tools}

We collect the measure-theoretic conventions used in the proof. Let $(\mathcal S,\mathcal A)$ be a measurable space. If a finite measure $\mu$ is absolutely continuous with respect to another finite measure $\nu$, written $\mu\ll\nu$, the Radon-Nikodym theorem states that there exists a measurable function $d\mu/d\nu$ such that
\begin{equation}
\label{eq:rn_definition_app}
\mu(E)=\int_E\frac{d\mu}{d\nu}(s)d\nu(s),
\qquad E\in\mathcal A .
\end{equation}
The function $d\mu/d\nu$ is the Radon-Nikodym density of $\mu$ with respect to $\nu$~\cite{folland1999real}. In Lemma~\ref{lem:uniform_filter_forgetting}, the relevant measures are block kernels with common reference measure $F^{\otimes N}$; no density with respect to Lebesgue measure is required.

For a finite signed measure $\eta$, we use the total-variation norm
\begin{equation}
\label{eq:tv_signed_partition}
\lVert\eta\rVert_{\rm TV}
\triangleq
\sup_{\{E_i\}}
\sum_i|\eta(E_i)|,
\end{equation}
where the supremum is over all finite measurable partitions. If $\eta=\eta^+-\eta^-$ is the Jordan decomposition, then
\begin{equation}
\lVert\eta\rVert_{\rm TV}=\eta^+(\mathcal S)+\eta^-(\mathcal S).
\end{equation}
For probability measures $\mu$ and $\nu$, the signed measure $\eta=\mu-\nu$ has zero total mass. Hence $\eta^+(\mathcal S)=\eta^-(\mathcal S)$ and
\begin{equation}
\label{eq:tv_probability_factor_two}
\lVert\mu-\nu\rVert_{\rm TV}
=2\sup_{E\in\mathcal A}|\mu(E)-\nu(E)|.
\end{equation}
This is the convention used throughout the paper; some probability texts use the right-hand side of \eqref{eq:tv_probability_factor_two} divided by two.

If $(U,V)$ is any coupling of $\mu$ and $\nu$, then
\begin{equation}
\label{eq:tv_coupling_bound}
\lVert\mu-\nu\rVert_{\rm TV}
\le 2\mathsf P(U\ne V).
\end{equation}
A maximal coupling attains equality in \eqref{eq:tv_coupling_bound}; see, e.g.,~\cite{lindvall2002lectures}.

\section{Random-Minorization Coupling Principle}
\label{app:random_minorization}
We record the elementary coupling principle used in Lemma~\ref{lem:uniform_filter_forgetting}. Let $K_y(q,\cdot)$ be posterior kernels indexed by entrance states $q\in\mathsf Q$ and by a realized block $y$. Suppose that, on a measurable good-block set $\mathcal G$, there is a number $\alpha>0$ such that
\begin{equation}
\label{eq:appendix_minorization}
K_y(q,E)\ge\alpha K_y(q',E),
\qquad q,q'\in\mathsf Q,\ E\in\mathcal B(\mathsf Q),\ y\in\mathcal G.
\end{equation}
Then, for each $y\in\mathcal G$ and each pair $q,q'$, there is a probability kernel $H_y^{q,q'}$ satisfying
\begin{equation}
K_y(q,\cdot)=\alpha K_y(q',\cdot)+(1-\alpha)H_y^{q,q'}(\cdot).
\end{equation}
Thus the two posterior kernels can be coupled so that the two terminal states are equal with probability at least $\alpha$. If the entrance states are already equal, the two posterior kernels are identical and the terminal states can be chosen equal with probability one.

Assume further that, conditional on every possible current predictive filter, the probability of the good block is at least $p>0$. Let $e_j$ be the conditional probability that two coupled filtering trajectories are not coupled at the beginning of block $j$. The construction above gives
\begin{equation}
\mathbb E[e_{j+1}|\mathcal F_j]\le(1-p\alpha)e_j.
\end{equation}
Since $\lVert\rho-\sigma\rVert_{\rm TV}\le2\mathsf P(U\ne V)$ for every coupling $(U,V)$ of $\rho$ and $\sigma$, this implies the block contraction used in \eqref{eq:block_tv_recursion_repaired}. This is the standard random-minorization coupling argument for nonlinear filters; see, for example, the coupling treatments of filter stability in~\cite{douc2009forgetting,legland2004hilbert}.
\section{Auxiliary Analytic and Convex-Analytic Facts}
\label{app:aux_theorems}

The proof uses three standard facts. First, the identity theorem states that if a holomorphic function on a connected open set has zeros with an accumulation point inside the set, then the function is identically zero on that set~\cite{ahlfors1979complex}. Second, the Weierstrass theorem states that a locally uniformly convergent series of holomorphic functions has a holomorphic sum~\cite{ahlfors1979complex}. Third, for a compact metric space $\mathcal A$, the set $\mathcal P(\mathcal A)$ of Borel probability measures is weakly compact~\cite{billingsley1999convergence}.

For the KKT step, the following convex-analytic form is used. Let $\mathcal C$ be a convex subset of a locally convex topological vector space, let $g$ be a continuous affine functional, and suppose that the constraint $g(x)\le0$ satisfies Slater's condition: there exists $x_0\in\mathcal C$ with $g(x_0)<0$. If a continuous linear functional $L$ is maximized over $\{x\in\mathcal C:g(x)\le0\}$ at $x^\times$, then there exists $\lambda\ge0$ such that
\begin{equation}
\label{eq:kkt_appendix_variational}
L(x)-\lambda g(x)
\le
L(x^\times)-\lambda g(x^\times),
\qquad x\in\mathcal C,
\end{equation}
and $\lambda g(x^\times)=0$~\cite{rockafellar1970convex}. In Lemma~\ref{lem:kkt}, $\mathcal C=\mathcal P([0,A])$ and $g(F)=\int x\,dF(x)-P_{\rm avg}$.

\bibliographystyle{IEEEtran}
\bibliography{ref_tit}

\end{document}